\documentclass[letterpaper, abstracton, DIV11, 12pt, titlepage=false]{scrartcl}

\usepackage[ruled]{algorithm2e}

\SetAlFnt{\small}
\SetAlCapFnt{\small}
\SetAlCapNameFnt{\small}
\SetAlCapHSkip{0pt}
\IncMargin{-\parindent}
\let\chapter\undefined

\usepackage{enumerate}
\usepackage[round]{natbib}
\usepackage[english]{babel}
\usepackage{amsmath}
\usepackage[usenames,dvipsnames]{xcolor}
\usepackage{amssymb}
\usepackage{mathrsfs}
\usepackage{amsthm}
\usepackage{mathabx}
\usepackage{dsfont}
\usepackage{pifont}
\usepackage{bm}
\usepackage{tikz}

\usepackage{mwe}    
\usepackage{subfig}

\usepackage{bm}
\usepackage{pdfsync}
\usepackage{wrapfig}
\usepackage{microtype}
\usepackage{setspace}
\usepackage[colorlinks=true,bookmarks=true, pdftex, citecolor = blue, linkcolor = blue,urlcolor = blue]{hyperref}

\onehalfspace

\DisableLigatures[f]{encoding = *, family = * }

\def\bf{\normalfont\bfseries}

\definecolor{darkgreen}{RGB}{40,150,70}
\newcommand{\cmark}{\textcolor{darkgreen}{\ding{51}}}
\newcommand{\xmark}{\textcolor{red}{\ding{55}}}

\newcommand{\f}{\varphi}
\newcommand{\g}{\psi}
\newcommand{\h}{h^{\beta}}
\newcommand{\hyb}{(1-\beta)\f + \beta \g}
\newcommand{\bmax}{\beta^{\text{max}}}
\newcommand{\URBIr}{\text{URBI($r$)}}
\newcommand{\NMQ}{\text{$(N,M,\bm q)$}}
\newcommand{\BetaMax}{\textsc{BetaMax}}

\changenotsign

\newtheorem{claim}{Claim}

\theoremstyle{plain}
\newtheorem{theorem}{Theorem}
\newtheorem{lemma}{Lemma}
\newtheorem{proposition}{Proposition}
\newtheorem{corollary}{Corollary}

\newtheorem{fact}{Fact}

\theoremstyle{definition}
\newtheorem{definition}{Definition}

\newtheorem{example}{Example}
\newcommand{\ourrep}{}

\theoremstyle{remark}

\newcommand{\ThmConstructionStatement}{	
Given a setting \NMQ, for any hybrid-admissible pair $(\f,\g)$ we have:
\vspace{-0.4em}
\begin{enumerate}[1.]
\setlength{\itemsep}{0pt}
	\item \label{itm:thm:construction:exist} for any $r < 1$ there exists a non-trivial $\beta > 0$ such that $\h$ is $r$-partially strategyproof, 
	\item \label{itm:thm:construction:mon} the mapping $\beta \mapsto \rho_{\NMQ}(\h)$ is monotonic and decreasing.
\end{enumerate}
}
\newcommand{\ThmConstructionProof}{%
To see statement \ref{itm:thm:construction:mon}, fix an agent $i\in N$, a preference profile $(P_i,P_{-i}) \in \mathcal{P}^N$, a misreport $P_i' \in \mathcal{P}$, and a utility function $u_i \in U_{P_i}$. 
If for any $\beta \in [0,1]$, the hybrid $\h$ is manipulable for $i$ in this situation, then 
\begin{equation}
	\left\langle u_i , \h_{i}(P_i,P_{-i})-\h_{i}(P_i',P_{-i})\right\rangle <0.  
\end{equation}
By linearity, we can decompose the left side to 
\begin{eqnarray}
		& & \left\langle u_i , \h_{i}(P_i,P_{-i})-\h_{i}(P_i',P_{-i})\right\rangle \\
		& = & (1-\beta) \left\langle u_i , \f_{i}(P_i,P_{-i})-\f_{i}(P_i',P_{-i})\right\rangle + \beta  \left\langle u_i , \g_{i}(P_i,P_{-i})-\g_{i}(P_i',P_{-i})\right\rangle
.  
\end{eqnarray}
The first part (with factor $(1-\beta)$ must be non-negative by strategyproofness of $\f$. 
Thus, $\left\langle u_i , \g_{i}(P_i,P_{-i})-\g_{i}(P_i',P_{-i})\right\rangle < 0$. 
This implies that for any $\beta \in [0,B]$, agent $i$ in this fixed situation will prefer truthful reporting to misreporting $P_i'$, and for any $\beta \in (B,1]$, it will strictly prefer misreporting $P_i'$. 
Consequently, the set of utility functions, for which the hybrid $\h$ makes truthful reporting a dominant strategy shrinks as $\beta$ increases. 
Therefore, the maximal bound $r$ for which we can guarantee truthful reporting to be a dominant strategy for any agent with utility in \URBIr\ also shrinks. 
This implies that the mapping $\beta \mapsto \rho(\h)$ is monotonic and decreasing.

%
%
%
%
%

The proof for statement \ref{itm:thm:construction:exist} is more challenging. 
Consider a strategyproof mechanism $\f$ and a weakly less varying, upper invariant mechanism $\g$, a fixed setting \NMQ, and a fixed bound $r <1$. 
We must find a mixing factor $\beta > 0 $ such that no agent with a utility satisfying \URBIr\ will find a beneficial manipulation to the hybrid $\h$. 

Let $\bm P = (P_i,P_{-i}) \in \mathcal{P}^N$ be a preference profile, $u_i \in U_{P_i}$ a utility function for agent $i$, and let $P_i' \in \mathcal{P}$ be a potential misreport, where
\begin{equation}
	P_i: a_1 \succ \ldots\succ a_m.
\end{equation}
Suppose that $\g$ changes the assignment for $i$ (otherwise the incentive constraint for the hybrid mechanism is trivially satisfied for this preference profile and misreport by strategyproofness of $\f$). 
By Lemma \ref{lem:util_change_structure}, there exists a rank $L \in \{1,\ldots,m-1\}$ such that the gain in expected utility from reporting $P_i'$ instead of $P_i$ under $\g$ is upper-bounded by
\begin{equation}
	\left\langle u_i, \g_i(P_i',P_{-i}) - \g_i(P_i,P_{-i})\right\rangle \leq u_i(a_{L}) - u_i(a_m),
\end{equation}
and the utility gain from reporting $P_i$ truthfully instead of the misreport $P_i'$ under $\f$ is lower-bounded by
\begin{equation}
	\left\langle u_i, \f_i(P_i,P_{-i}) - \f_i(P_i',P_{-i})\right\rangle \geq \varepsilon \cdot \left( u_i(a_{L}) - u_i(a_{L+1})\right),
\end{equation}
where $\varepsilon > 0$ depends only on the setting and the mechanism $\f$.
Thus, the utility gain from reporting $P_i$ truthfully instead of the misreport $P_i'$ under the hybrid $\h$ is lower bounded by
\begin{eqnarray}
	& & \left\langle u_i, \h_i(P_i,P_{-i}) - \h_i(P_i',P_{-i})\right\rangle \\
	& = & (1-\beta) \left\langle u_i, \f_i(P_i,P_{-i}) - \f_i(P_i',P_{-i})\right\rangle + \beta \left\langle u_i, \g_i(P_i,P_{-i}) - \g_i(P_i',P_{-i})\right\rangle \\
	& \geq & (1-\beta)\varepsilon \left(u_i(a_{L}) - u_i(a_{L+1}) \right) - \beta \left( u_i(a_{L}) - u_i(a_m) \right)  \\
	& = & \left( u_i(a_{L}) - u_i(a_{m}) \right) \left( \varepsilon(1-\beta) - \beta \right)
		- \left( u_i(a_{L+1}) - u_i(a_{m}) \right) \left( \varepsilon(1-\beta) \right).
\end{eqnarray}
If $u_i$ satisfies \URBIr, we can lower bound the difference $ u_i(a_{L}) - u_i(a_{m})$ by $r\left( u_i(a_{L+1}) - u_i(a_{m}) \right)$ and get
\begin{eqnarray}
	& & \left\langle u_i, \h_i(P_i,P_{-i}) - \h_i(P_i',P_{-i})\right\rangle \\
	& \geq & \frac{u_i(a_{L+1}) - u_i(a_{m}) }{r} \left( \varepsilon(1-\beta) - \beta - r \varepsilon(1-\beta) \right).
\end{eqnarray}
Since $\frac{u_i(a_{L+1}) - u_i(a_{m}) }{r} \geq 0$, this is positive if and only if
\begin{equation}
\varepsilon(1-\beta) - \beta - r \varepsilon(1-\beta) \geq 0 \;\;\; \Leftrightarrow \;\;\; \beta \leq \frac{\varepsilon (1-r)}{\varepsilon (1-r) + 1}.
\label{eq:beta_bound_for_construction}
\end{equation}
This upper bound for $\beta$ is strictly positive and independent of the specific utility function $u_i$, the preference profile $(P_i, P_{-i})$, and the misreport $P_i'$. 
Therefore, $\h$ is $r$-partially strategyproof if $\beta$ is chosen to satisfy (\ref{eq:beta_bound_for_construction}).
\begin{lemma} 
\label{lem:util_change_structure} 
Consider a setting \NMQ, a strategyproof mechanism $\f$, a weakly less varying, upper invariant mechanism $\g$, an agent $i\in N$, a preference profile $\bm P = (P_i,P_{-i}) \in \mathcal{P}^N$, a misreport $P_i' \in \mathcal{P}$, and a utility function $u_i \in U_{P_i}$. 
If $\f_i(P_i,P_{-i}) \neq \f_i(P_i',P_{-i})$, then there exists $L \in \{1,\ldots,m-1\}$ such that the gain in expected utility from reporting $P_i'$ instead of $P_i$ under $\g$ is upper-bounded by
\begin{equation}
	u_i(a_L) - u_i(a_m),
\end{equation}
and the gain in expected utility from reporting $P_i$ truthfully instead of $P_i'$ under $\f$ is lower-bounded by
\begin{equation}
	\varepsilon \left(u_i(a_L) - u_i(a_{L+1}) \right),
\end{equation}
where $\varepsilon > 0 $ depends only on the setting and the mechanism $\f$.
\end{lemma}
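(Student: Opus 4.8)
The plan is to rewrite both expected-utility differences as weighted sums of cumulative allocation differences via summation by parts (Abel summation), and then bound the two sums separately: strategyproofness (equivalently, first-order stochastic dominance) drives the lower bound for $f$, while the weakly-less-varying property drives the upper bound for $g$. Throughout I fix the true ranking $t_i : a_1 \succ \dots \succ a_m$ and let $K$ be the length of the longest common prefix of $t_i$ and $t_i'$. The rearrangement of the objects below rank $K$ that turns $t_i$ into $t_i'$ decomposes into adjacent swaps, each of which keeps $a_1,\dots,a_K$ in its upper contour set, so weak invariance of both $f$ and $g$ gives $f(t_i)(a_l)=f(t_i')(a_l)$ and $g(t_i)(a_l)=g(t_i')(a_l)$ for all $l\le K$. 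Hence both gains involve only ranks $l>K$.

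For the bound on $f$, I would use that strategyproofness, i.e. the incentive constraint for every $u_i\in t_i$, is equivalent to $f(t_i)$ first-order stochastically dominating $f(t_i')$ with respect to $t_i$. Thus the cumulative differences $D_l=\sum_{j\le l}\bigl(f(t_i)(a_j)-f(t_i')(a_j)\bigr)$ satisfy $D_l\ge 0$, $D_m=0$, $D_l=0$ for $l\le K$, and, since $f(t_i)\ne f(t_i')$, at least one $D_l>0$. Summation by parts gives $\langle u_i,f(t_i)-f(t_i')\rangle=\sum_{l=1}^{m-1}D_l\,\bigl(u_i(a_l)-u_i(a_{l+1})\bigr)$. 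I then take $L$ to be the smallest rank with $D_L>0$ (so $K<L\le m-1$) and drop the remaining nonnegative terms to obtain $\langle u_i,f(t_i)-f(t_i')\rangle\ge D_L\,\bigl(u_i(a_L)-u_i(a_{L+1})\bigr)$. Since $T^n$ is finite and $f$ is fixed, the set of strictly positive values attained by any $D_l$ is finite, so its minimum $\epsilon>0$ is well defined and depends only on the setting and $f$; this yields the stated lower bound.

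For the bound on $g$, I would again apply summation by parts, writing $\langle u_i,g(t_i')-g(t_i)\rangle=\sum_{l>K}E_l\,\bigl(u_i(a_l)-u_i(a_{l+1})\bigr)$ with $E_l=\sum_{j\le l}\bigl(g(t_i')(a_j)-g(t_i)(a_j)\bigr)$ and $|E_l|\le 1$. Since $\sum_{l\ge L}\bigl(u_i(a_l)-u_i(a_{l+1})\bigr)=u_i(a_L)-u_i(a_m)$, the target upper bound follows once the positive part of the $E_l$ is confined to ranks $l\ge L$. This is exactly where weakly-less-varying enters: decomposing $t_i\to t_i'$ into adjacent swaps ordered so that higher-ranked objects are placed first, I would invoke the contrapositive of Definition~\ref{def:wlv} — across any single adjacent swap that leaves $f$'s allocation unchanged, $g$'s allocation is unchanged too — to argue that $g$ cannot strictly raise a cumulative rank before $f$ first does, namely before rank $L$ (the degenerate case $\langle u_i,g(t_i')-g(t_i)\rangle\le 0$ being trivial for any choice of $L$).

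The main obstacle is precisely this last synchronization step: translating the single-swap, whole-allocation statement of weakly-less-varying into a cumulative-rank statement about a \emph{general} misreport. It requires selecting the swap decomposition carefully and tracking, swap by swap, which cumulative ranks $f$ (and therefore $g$) may already have altered, so that the single rank $L$ extracted from $f$ simultaneously controls the positive contributions of $g$. Once both bounds hold for this common $L$, they combine exactly as in the proof of Theorem~\ref{thm:construction} to complete the lemma.
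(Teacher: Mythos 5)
Your treatment of the lower bound for $f$ is correct and in fact cleaner than the paper's: strategyproofness over all of $t_i$ is indeed equivalent to first-order stochastic dominance of $f(t_i)$ over $f(t_i')$, Abel summation gives $\langle u_i, f(t_i)-f(t_i')\rangle=\sum_{l}D_l\bigl(u_i(a_l)-u_i(a_{l+1})\bigr)$ with all $D_l\ge 0$, and taking $L$ to be the first rank with $D_L>0$ (which, by FOSD, coincides with the first rank at which $f$'s allocation changes, the paper's choice of $L$) and $\epsilon$ the minimum over the finite set of strictly positive cumulative gaps yields the claim. The paper instead tracks the loss swap-by-swap along a canonical transition and takes $\epsilon$ to be the minimum positive single-object allocation change, but the two constants play the same role.

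The genuine gap is in the upper bound for $g$, which is exactly where the two hypotheses on $g$ must do their work, and you have left it as an acknowledged ``obstacle'' rather than closing it. What you need is $E_l\le 0$ for all $l<L$, and weak invariance applied to the common prefix only gives this for $l\le K$; the problematic range is $K<l<L$. The paper closes this as follows. Take the canonical (bubble-sort) transition from $t_i'$ to $t_i$ that first moves $a_1$ into position $1$, then $a_2$ into position $2$, and so on. Because $f$ is strategyproof, each object's allocation under $f$ changes \emph{monotonically} along its own move-up phase (swap consistency) and is frozen afterwards (weak invariance of $f$); since the net change of $f$'s allocation of $a_k$ between $t_i'$ and $t_i$ is zero for every $k<L$, every individual swap in the phases that restore $a_1,\dots,a_{L-1}$ leaves $f$'s entire allocation vector unchanged. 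Only now does weakly-less-varying apply, swap by swap, to conclude that $g$'s allocation is also unchanged throughout these phases; one may therefore assume without loss of generality that $t_i'$ already agrees with $t_i$ on the first $L-1$ positions. The remaining swaps all take place strictly below rank $L-1$, so \emph{weak invariance of $g$} (not weakly-less-varying) gives $g(t_i)(a_k)=g(t_i')(a_k)$ for all $k<L$, hence $E_l=0$ for $l<L$, and your Abel-summation bound $\sum_{l\ge L}E_l\bigl(u_i(a_l)-u_i(a_{l+1})\bigr)\le u_i(a_L)-u_i(a_m)$ goes through with $E_l\le 1$. Your sketch invokes only weakly-less-varying for this step and never brings in weak invariance of $g$ for the ranks between $K$ and $L$, nor the monotonicity-along-the-transition argument that licenses transferring ``no change'' from $f$ to $g$ one swap at a time; without these two ingredients the synchronization you describe cannot be completed.
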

\begin{proof} 
We first introduce the auxiliary concept of the canonical transition. 
Consider two preference orders $P$ and $P'$. 
A \emph{transition} from $P$ to $P'$ is a sequence of preference orders $\tau(P,P') = (P^0, \ldots, P^S)$ such that
\begin{itemize}
\setlength{\itemsep}{0pt}
	\item $P^0 = P$ and $P' = P^S$,
	\item $P^{k+1} \in N_{P^{k}}$ for all $k \in \{0, \ldots, S-1\}$,
\end{itemize}
where $N_P$ is the neighborhood of preference order $P$. 
A transition can be interpreted as a sequence of swaps of adjacent objects that transform one preference order into another if applied in order. 
Suppose,
\begin{equation}
	P': a_1 \succ a_2 \succ \ldots \succ a_m. 
\end{equation} 
Then the \emph{canonical transition} is the transition that results from starting at $P$ and swapping $a_1$ (which may not be in first position for $P$) up until it is in first position. 
Then do the same for $a_2$, until it is in second position, and so on, until $P'$ is obtained.

Suppose $P_i$ corresponds to the preference ordering
\begin{eqnarray*}
P_i & : & a_1 \succ \ldots \succ a_{L-1} \succ a_L \succ \ldots \succ a_m,
\end{eqnarray*}
and let $a_L$ be the best choice object (under $P_i$) for which the assignment under $\f$ changes, i.e.,
\begin{equation}
	\f_{i,a_k}(P_i,P_{-i}) = \f_{i,a_k}(P_i',P_{-i}) \text{ for }k < L, \;\;\; \f_{i,a_L}(P_i,P_{-i}) \neq \f_{i,a_L}(P_i',P_{-i}).
	\label{eq:no_change_trans}
\end{equation}
Consider the canonical transition from $P_i'$ to $P_i$. 
This will bring the objects $a_k, k < L$ into position (as they are under $P_i$) first. 
By Theorem 1 in \citep{MennleSeuken2017PSP_WP} and because $\f$ is strategyproof, the assignment for each of these objects can only weakly increase or weakly decrease. 
However, by (\ref{eq:no_change_trans}) their assignments remain unchanged. 
Therefore, the assignment does not change for any of the swaps that bring the objects $a_k, k < L$ into position. 
Using that $\g$ is weakly less varying than $\f$, we can assume that
\begin{eqnarray*}
P_i' & : & a_1 \succ \ldots \succ a_{L-1} \succ a_L' \succ \ldots \succ a_m'
\end{eqnarray*}
without loss of generality.

By upper invariance of $\g$, the highest gain the agent could obtain from reporting $P_i'$ instead of $P_i$ arises if all probability for its last choice is converted to probability for the best choice for which the assignment can change at all, i.e., $a_L$. 
Thus, the utility gain is bounded by
\begin{equation}
u_i(a_L) - u_i(a_m).
\end{equation}

Let
\begin{equation}
	\varepsilon = \min \left\{ \left| \f_{i,j}(P_i,P_{-i}) - \f_{i,j}(P_i',P_{-i}) \right|~\left|~
	\begin{array}{c}
	j \in M, i \in N, (P_i,P_{-i}) \in \mathcal{P}^N, P_i' \in \mathcal{P}:  \\
	\f_{i,j}(P_i,P_{-i}) \neq \f_{i,j}(P_i',P_{-i})	
	\end{array}
	\right. \right\}
\end{equation}
be the smallest positive amount by which the assignment of some object to some agent can change upon a change of report by that agent under $\f$. 
In the canonical transition from $P_i$ to $P_i'$, the object $a_L$ will only be swapped downwards, i.e., its assignment can not increase in any step. 
But since we assumed that it changes, it must strictly decrease. 
This decrease has at least magnitude $\varepsilon$ by definition. 
Thus, when misreporting, the agent looses at least $\varepsilon$ probability for $a_L$ in some swap. 
From Theorem 1 in \citep{MennleSeuken2017PSP_WP} we know that the assignment for the other object involved in that swap must strictly increase by the same amount $\varepsilon$. 
Since all other swaps reverse the order of objects from ``right'' (as under $P_i$) to ``wrong'' (as under $P_i'$), the assignment can only get weakly worse for the agent. 
Therefore, the gain from reporting $P_i$ truthfully instead of $P_i'$ under $\f$ is at least $\varepsilon \left( u_i(a_{L}) - u_i(a_{L+1}) \right)$.
\end{proof}
This completes the proof of Theorem \ref{thm:construction}
}
\newcommand{\ThmEfficiencyStatement}{	
Given a setting \NMQ, for any mechanisms $\f$ and $\g$, any preference profile $\bm P \in \mathcal{P}^N$, and any mixing factor $\beta \in [0,1]$ the following hold: 
\vspace{-0.4em}
\begin{enumerate}[1.]
\setlength{\itemsep}{0pt}
	\item \label{itm:thm:efficiency:expost} if $\f(\bm P)$ and $\g(\bm P)$ are ex-post efficient at $\bm P$, then $\h(\bm P)$ is ex-post efficient at $\bm P$,
	\item \label{itm:thm:efficiency:dominance} $\g(\bm P)$ ordinally (or rank) dominates $\f(\bm P)$ at $\bm P$ \emph{if and only if} 
		\begin{itemize}
		\setlength{\itemsep}{0pt}
			\item $\h(\bm P)$ ordinally (or rank) dominates $\f(\bm P)$ at $\bm P$, and
			\item $\g(\bm P)$ ordinally (or rank) dominates $\h(\bm P)$ at $\bm P$.
		\end{itemize} 
\end{enumerate}
}
\newcommand{\ThmEfficiencyProof}{%
To see statement \ref{itm:thm:efficiency:expost}, note that $\h(\bm P)$ is a convex combination of (and therefore a lottery over) the assignments $\f(\bm P)$ and $\g(\bm P)$. 
Since both are ex-post efficient, each has a lottery decomposition into ex-post efficient, deterministic assignments. 
Therefore, we can construct a lottery decomposition of $\h(\bm P)$ into ex-post efficient, deterministic assignments by combining the two lotteries. 
This shows ex-post efficiency of $\h(\bm P)$ at $\bm P$.

Suppose, $\g(\bm P)$ ordinally dominates $\f(\bm P)$, i.e., for all $i\in N$ and all $j \in M$ we have 
\begin{equation}
	\sum_{j' \in M:~ P_i: j' \succeq j} \f_{i,j}(\bm P) \leq \sum_{j' \in M:~ P_i: j' \succeq j} \g_{i,j}(\bm P).
\end{equation}
With $\h(\bm P) = (1-\beta) \f(\bm P) + \beta \g(\bm P)$ it follows directly that for any $\beta \in [0,1]$,
\begin{equation}
	\sum_{j' \in M:~ P_i: j' \succeq j} \f_{i,j}(\bm P) \leq \sum_{j' \in M:~ P_i: j' \succeq j} \h_{i,j}(\bm P) \leq \sum_{j' \in M:~ P_i: j' \succeq j} \g_{i,j}(\bm P),
\end{equation}
i.e., $\g$ ordinally dominates $\h$ at $\bm P$, which in turn dominates $\f$. 
Conversely, if $\g(\bm P)$ does not ordinally dominate $\f(\bm P)$, then there exists some agent $i \in N$ and some $j\in M$, such that 
\begin{equation}
	\sum_{j' \in M:~ P_i: j' \succeq j} \f_{i,j}(\bm P) > \sum_{j' \in M:~ P_i: j' \succeq j} \g_{i,j}(\bm P).
\end{equation}
Again, by linearity, this implies 
\begin{equation}
	\sum_{j' \in M:~ P_i: j' \succeq j} \f_{i,j}(\bm P) > \sum_{j' \in M:~ P_i: j' \succeq j} \h_{i,j}(\bm P) > \sum_{j' \in M:~ P_i: j' \succeq j} \g_{i,j}(\bm P),
\end{equation}
which means that $\h$ does not ordinally dominate $\f$ and is not ordinally dominated by $\g$. 
This establishes statement \ref{itm:thm:efficiency:dominance} for ordinal dominance. 
For rank dominance, the result is analogous, where we exploit the fact that the rank distribution of $\h$ is the $\beta$-convex combination of the rank distributions of $\f$ and $\g$. 
}

\newcommand{\PropComputabilityStatement}{	
Given a setting \NMQ, a hybrid-admissible pair of mechanisms $(\f,\g)$, and a bound $\underline\rho \in [0,1]$ Algorithm \ref{alg:compute} (\BetaMax) is complete and correct for the mechanism designer's problem of finding the maximal mixing factor $\bmax_{\NMQ,\f,\g}(\underline\rho)$. 
}
\newcommand{\PropComputabilityProof}{
Since there are only finitely many agents, preference profiles, misreports, and ranks, the loops of the algorithm eventually terminate. 
Thus, the algorithm terminates on any admissible input parameters (i.e., completeness). 

For correctness, we use the fact that by Theorem 4 in \citep{MennleSeuken2017PSP_WP}, $r$-partial strategyproofness is equivalent to strong $r$-partial dominance-strategyproofness. 
Formally, for any agent $i\in N$, any preference profile $(P_i,P_{-i}) \in \mathcal{P}^N$, any misreport $P_i' \in \mathcal{P}$, and any $K \in \{1,\ldots,m\}$, we define the following polynomials (in $r$)
\begin{equation}
	p_K^{\f}(r) = \sum_{j:~ \text{rank}_{P_i}(j) \leq K} r^{\text{rank}_{P_i}(j)} \cdot \left(\f_{i,j}(P_i,P_{-i}) - \f_{i,j}(P_i',P_{-i})\right),
\end{equation} 
\begin{equation}
	p_K^{\g}(r) = \sum_{j:~ \text{rank}_{P_i}(j) \leq K} r^{\text{rank}_{P_i}(j)} \cdot \left(\g_{i,j}(P_i,P_{-i}) - \g_{i,j}(P_i',P_{-i})\right),
\end{equation}
where $\text{rank}_{P_i}(j)$ is the rank of $j$ in the preference order of agent $i$, i.e., the number of objects that $i$ weakly prefers to object $j$. 
For the hybrid mechanism, the correspoinding polynomial is 
\begin{equation}
	p^{\h}_K(r) = (1-\beta)p^{\f}_k(r) + \beta p^{\g}_k(r),
\end{equation}
and $\h$ is $\underline\rho$-partially strategyproof if and only if $p^{\h}_K(\underline\rho) \geq 0$ for all combinations $i, (P_i,P_{-i}), P_i', K$. 
Since $\f$ is strategyproof, $p_K^{\f}(\underline\rho) \geq 0$, and therefore, the only way that $p^{\h}_K(\underline\rho)$ can be negative is for $p^{\g}_K(\underline\rho)$ to be negative. 
Conversely, if $p^{\h}_K(\underline\rho) \geq 0$ for some $\beta$, then $p^{h^{\beta'}}_K(\underline\rho) \geq 0$ for any $\beta' \leq \beta$ as well, i.e., reducing $\beta$ will not lead to a violation of any of the positivity constraints. 
Finally, the only constraints where $\beta$ is not arbitrary are those where $p_K^{\g}(\underline\rho) < 0$ strictly.
In this case, 
\begin{eqnarray}
	& & p_K^{\h}(\underline\rho) = (1-\beta) p_K^{\f}(\underline\rho) + \beta p_K^{\g}(\underline\rho) \geq 0 \\
	& \Leftrightarrow & \beta \leq \frac{p_K^{\f}(\underline\rho)}{p_K^{\f}(\underline\rho)-p_K^{\g}(\underline\rho)}
\end{eqnarray}

Algorithm \BetaMax\ starts with an optimistic guess of $\bmax =1 $ and then reduces this value if this is required to establish a positivity constraint. 
As we observed, subsequent further reductions of $\bmax$ cannot lead to a renewed violation of a previously checked constraint. 
Since the algorithm reduces $\bmax$ only when this is strictly required by some constraint, and this reduction is minimal, the final value of the variable $\bmax$ will be precisely the maximal mixing factor for which $\h$ is $\underline\rho$-partially strategyproof. 	
}

\newcommand{\ThmPSWLVStatement}{	
PS is weakly less varying than RSD.
}
\newcommand{\ThmPSWLVProof}{
Suppose, $n$ agents compete for $m = m_a + 2 + m_b$ objects with capacities given by $\bm q$, and let $M = \{a_1,\ldots,a_{m_a},x,y,b_1, \ldots, b_{m_b}\}$. 
Agent 1 is considering the two preference reports
\begin{eqnarray*}
 P_1 & : & a_1 \succ \ldots \succ a_{m_a} \succ x \succ y \succ b_1 \succ \ldots \succ b_{m_b}, \\
 P_1' & : & a_1 \succ \ldots \succ a_{m_a} \succ y \succ x \succ b_1 \succ \ldots \succ b_{m_b},
\end{eqnarray*}
where the positions of $x$ and $y$ are reversed in the second report. 
The reports of the other agents are fixed and given by $P_{-1}$.

Further suppose that with reports $(P_1,P_{-1})$, the objects where exhausted at times $0< \tau_1 \leq \tau_2 \leq \ldots \leq \tau_m = 1$ under PS. 
Re-label the objects as $j_1, \ldots , j_m$ in increasing order of the times at which they were exhausted. 
If two objects were exhausted at the same time, re-label them in arbitrary order. 
Denote by $\tau_x$ and $\tau_y$ the times at which $x$ and $y$ were exhausted, respectively.

Given these considerations, Claim \ref{claim:PSinvarEquiv} yields equivalent conditions under which PS changes the assignment, Claim \ref{claim:RSDinvarEquiv} yields similar conditions under which RSD changes the assignment, and Claim \ref{claim:PSinvarRSD} shows that the former condition implies the latter. 

\begin{claim}
\label{claim:PSinvarEquiv} 
In Theorem \ref{thm:PS_wlv_RSD}, $\text{PS}_1(P_1,P_{-1}) \neq \text{PS}_1(P_1',P_{-1})$ if and only if
		\begin{enumerate}
\setlength{\itemsep}{0pt}
			\item \label{lemma.PSinvarEquiv.lgoodsoutbeforexy} there exists $k \geq m_a$ such that $\tau_1 \leq \ldots \leq \tau_k < \min(\tau_x, \tau_y) \leq 1$, and
			\item \label{lemma.PSinvarEquiv.asoutbeforexy} for all $l \in \{1,\ldots,m_a\}$ there exists $ l' \in \{1,\ldots, k\}$ with $a_l = j_{l'}$.
		\end{enumerate}
\end{claim}
\begin{proof}
	\begin{description}	
		\item[``$\Rightarrow$''] 
			Choose $k$ such that $j_k$ is the last of the $a_1,\ldots,a_{m_a}$ to run out. 
			Suppose, $\tau_y \leq \tau_k$. 
			Agent 1 is busy consuming shares of other objects until time $\tau_k$, regardless of the reported order of $x$ and $y$. 
			After $\tau_k$ agent 1 consumes shares of $x$ until it is exhausted. 
			Because $y$ was already exhausted before $\tau_k$, agent 1 gets no shares of $y$. 
			Under report $P_1'$, it would finish consuming other objects at $\tau_k$ and find objects $y$ exhausted. 
			Hence, it would begin consuming shares of $x$ immediately, just as it did under report $P_1$. 
			Thus, the order in which $x$ and $y$ are reported does not matter for the times at which it consumes objects $x$ and $y$. 
			Because $P_1$ and $P_1'$ only differ in the order of $x$ and $y$, the remaining objects are also consumed in the same order and at the same times. 
			Hence, agent 1's assignment does not change.
		
			The case for $\tau_x \leq \tau_k$ is analogous.
	
			Because PS is non-bossy \citep{Ekici2012EquilibriumPS}, we know that if the switch from $P_1$ to $P_1'$ did not change the assignment for agent 1, it did not change the assignment at all. 
			
		\item[``$\Leftarrow$''] 
			Suppose the last of the objects $a_1,\ldots,a_{m_a}$ to be exhausted is $j_k$, and $\tau_k < \tau_y \leq \tau_x$. 
			Then agent 1 gets no shares of $y$. 
			If it switches its report to $P_1'$, it will receive a non-trivial share of $y$, hence the assignment changes.
		
			Now suppose the opposite, namely $\tau_y > \tau_x$. 
			Agent 1 begins consumption of $x$ at time $\tau_{k}$ and then turns to $y$ at time $\tau_x$. 
			Thus, agent 1 receives $\tau_x - \tau_{k}$ shares of $x$ and $\tau_y-\tau_x$ shares of $y$. 
			When it switches its report to $P_1'$, it will consume shares of $y$ between $\tau_{k}$ and $\tau_y'$. 
			We need to show that $\tau_y' - \tau_{k} > \tau_y-\tau_x $. 
			If $\tau_y' \geq \tau_y$, this is clear, because $\tau_{k} < \tau_x$ by assumption. 
			In the following we assume $\tau_y' < \tau_y$.
			
			Let $n_y(\tau)$ be the number of agents other than agent 1 consuming shares of $y$ at time $\tau$. 
			$n_y$ is integer-valued and increasing in $\tau$, and there must exist a $\delta > 0$ such that $n_y(\tau_y-\delta) \geq 1$. 
			This means that agent 1 is not the only agent consuming shares of $y$ before it is exhausted. 
			Otherwise, agent 1 would exhaust $y$ alone, which implies that agent 1 received no shares of $x$, a contradiction.
			
			If agent 1 reports $P_1'$ instead, let $n_y'(\tau)$ be the corresponding number of agents consuming $y$ at times $\tau$. 
			We observe that $x$ will be exhausted later, because agent 1 is no longer consuming shares of it. 
			This means that agents who prefer $x$ over $y$ will arrive later at $y$. 
			Agents arriving at $y$ from other objects than $x$ may also arrive later, because they face less competition from the agents stuck at $x$, etc. 
			Therefore $n_y' \leq n_y$.
			
			Under report $P_1$ from agent 1, $y$ is exhausted by $\tau_y$, i.e.,
			\begin{equation}
				q_y = \int_0^{\tau_y} n_y(\tau) + \mathds{1}_{\{\tau \geq \tau_x\}} d\tau, 
				\label{eqn.lemma.PSinvarEquiv.int1} 
			\end{equation}
			and under report $P_1'$, $y$ is exhausted by $\tau_y'$, i.e.,
			\begin{equation}
				q_y = \int_0^{\tau_y'} n_y'(\tau) + \mathds{1}_{\{\tau \geq \tau_{k}\}} d\tau \leq \int_0^{\tau_y'} n_y(\tau) + \mathds{1}_{\{\tau \geq \tau_{k}\}}d\tau.
				\label{eqn.lemma.PSinvarEquiv.int2} 
			\end{equation}
			Equating (\ref{eqn.lemma.PSinvarEquiv.int1}) and (\ref{eqn.lemma.PSinvarEquiv.int2}) gives
			\begin{eqnarray}
				\int_{0}^{\tau_y} n_y(\tau) + \mathds{1}_{\{\tau \geq \tau_x\}} d\tau & \leq & \int_{0}^{\tau_y'} n_y(\tau) + \mathds{1}_{\{\tau \geq \tau_{k}\}} d\tau \\
				\Rightarrow \int_{\tau_y'}^{\tau_y} n_y(\tau) + \mathds{1}_{\{\tau \geq \tau_{k}\}} d\tau & \leq & \int_{0}^{\tau_y'} \mathds{1}_{\{\tau \geq \tau_{k}\}} d\tau - \int_{0}^{\tau_y} \mathds{1}_{\{\tau \geq \tau_{x}\}} dt + \int_{\tau_y'}^{\tau_y} \mathds{1}_{\{\tau \geq \tau_{k}\}} d\tau \\
					& = & \int_{0}^{\tau_y} \mathds{1}_{\{\tau \geq \tau_{k}\}} - \mathds{1}_{\{\tau \geq \tau_{x}\}} d\tau \\
					& = & \tau_x-\tau_{k}.
			\end{eqnarray}
			We know that $j_k$ is exhausted before $\tau_y'$ and hence $n_y(\tau) + \mathds{1}_{\{\tau \geq \tau_{k}\}}\geq 1$ for $\tau \in [\tau_y',\tau_y]$, and $\geq 2$ for $\tau\in [\tau_y-\delta,\tau_y]$. 
			This yields
			\begin{equation}
				\tau_y - \tau_y' < \tau_x - \tau_{k},
			\end{equation}
			or equivalently $\tau_y - \tau_x < \tau_y' - \tau_{k}$.
	\end{description} 
\end{proof}

\begin{claim}
\label{claim:RSDinvarEquiv} 
In Theorem \ref{thm:PS_wlv_RSD}, $\text{RSD}_1(P_1',P_{-1}) \neq \text{RSD}_1(P_1,P_{-1})$ if and only if there exists a sequence $(c_1, \ldots, c_{k_c})$ of $k_c$ agents such that if RSD chose these agents first and in this order, they remove all objects $a_1, \ldots , a_{m_a}$ (and possibly more), but neither $x$, nor $y$.
\end{claim}
\begin{proof} 
	In the RSD mechanism, a permutations of agents is chose amongst all possible permutations with uniform probability. 
	The probability for agent 1 to get some object $j$ is
	\begin{equation}
		P[1 \text{ gets }j] = \frac{|\{\pi \text{ permutation of }N : 1 \text{ gets }j\text{ under }\pi\}|}{|\{\pi \text{ permutation of }N\}|},
	\end{equation}
	where the denominator is $n!$, and each permutation under which agent 1 gets $j$ contributes $\frac{1}{n!}$ to the total probability.
		
	For some permutation $\pi$ consider the turn of agent 1. 
	There are 5 possible cases:
	\begin{enumerate}
\setlength{\itemsep}{0pt}
		\item \label{lemma.RSDinvarEquiv.proof.case1} Agent 1 faces a choice set including some $a_l$'s. 
			This makes no contribution to its chances of getting $x$ or $y$. 
		\item \label{lemma.RSDinvarEquiv.proof.case2} Agent 1 faces a choice set consisting only of $b_l$'s. 
			Again, this makes no contribution to its chances of getting $x$ or $y$.
		\item \label{lemma.RSDinvarEquiv.proof.case3} Agent 1 faces only $b_l$'s and $x$, but not $y$. 
			This case contributes $\frac{1}{n!}$ to its chances of getting $x$. 
			This contribution is independent of the order in which it ranked $x$ and $y$ in its report.
		\item \label{lemma.RSDinvarEquiv.proof.case4} Agent 1 faces only $b_l$'s and $y$, but not $x$. 
			This case contributes $\frac{1}{n!}$ to its chances of getting $y$ and the contribution is again independent of the ranking of $x$ and $y$.
		\item \label{lemma.RSDinvarEquiv.proof.case5} Agent 1 faces $x$, $y$ and some $b_l$'s, but no $a_l$'s. 
			This case contributes $\frac{1}{n!}$ to either the probabilities for $x$ or $y$, depending on the ranking.
	\end{enumerate}
	\begin{description}
	\item[``$\Rightarrow$''] If changing from $P_1$ to $P_1'$ influences the assignment, the assignment for agent 1 must have changed. 
		This is because RSD is non-bossy (by Lemma \ref{lemm.rsd_nonbossy}). 
		RSD also is strategyproof, hence by Theorem 1 in \citet{MennleSeuken2017PSP_WP} the probabilities for objects $x$ and $y$ must have changed. 
		In all but the last case, the chances do not depend on the order in which $x$ and $y$ are reported. 
		Thus, at least one permutation leads to case (\ref{lemma.RSDinvarEquiv.proof.case5}). 
		This means that the sequence of agents chosen prior to agent 1 removes all $a_l$'s, but neither $x$ nor $y$.
	\item[``$\Leftarrow$''] Under report $P_1$, agent 1 will receive $x$ any time case (\ref{lemma.RSDinvarEquiv.proof.case5}) occurs, while under $P_1'$ it will receive $y$. 
		If a sequence $(c_1,\ldots,c_{k_c})$ as defined in Claim \ref{claim:RSDinvarEquiv} exists, it is also the beginning of at least one permutation. 
		When this permutation is selected, case (\ref{lemma.RSDinvarEquiv.proof.case5}) occurs. 
		Switching from report $P_1$ to $P_1'$ thus strictly increases agent 1's chances of getting $y$.
	\end{description} 
	\vspace{-2.3em}
\end{proof}
	
\enlargethispage{.7em}
\begin{claim}
\label{claim:PSinvarRSD} 
In Theorem \ref{thm:PS_wlv_RSD}, \ref{lemma.PSinvarEquiv.lgoodsoutbeforexy}. and \ref{lemma.PSinvarEquiv.asoutbeforexy}. from Claim \ref{claim:PSinvarEquiv} imply the existence of a sequence as described in Claim \ref{claim:RSDinvarEquiv}.
\end{claim}
\begin{proof}
	We prove the claim by constructing a sequence of agents
	\begin{equation}
		(c_1, \ldots,c_{k_c}) = (c_1^{1}, \ldots, c_1^{q_1},\ldots,c_k^{1},\ldots,c_k^{q_k})
	\end{equation} 
	inductively. 
	Under RSD this sequence will remove objects $j_1,\ldots,j_k$ in this order.
	\begin{description}
		\item[Selection of $c_k^1,\ldots,c_k^{q_k}$] By assumption $j_k$ was consumed strictly before $x$, hence $\tau_k < 1$.  
			Then at least $q_k + 1$ agents receive non-trivial shares of $j_k$. 
			Otherwise, if only $q_k$ agents received shares of $j_k$, they would get the entire capacity and take time 1 to consume it, a contradiction. 
			Select $q_k$ of these agents other than agent 1 as $c_k^1,\ldots,c_k^{q_k}$.
			
			Because all $c_k^1,\ldots,c_k^{q_k}$ actually received shares of $j_k$ under PS, they must all prefer $j_k$ to all other objects except for possibly $j_1,\ldots,j_{k-1}$. 
			In other words, suppose that $j_1,\ldots,j_{k-1}$ were removed under RSD in previous turns, the selected agents would remove $j_k$ completely if chosen next (in arbitrary order).
		\item[Selection of $c_l^1,\ldots,c_l^{q_l}, l < k$] Suppose, $c_{l+1}^1,\ldots,c_k^{q_k}$ have been selected. 
			Suppose further that $m_l$ agents (plus possibly agent 1) receive non-trivial shares of $j_l$ under PS. 
			There are two cases:
			\begin{description}
				\item[Case 1] At least $q_l$ of the $m_l$ agents have not been selected as any of the $c_{l+1}^1,\ldots,c_k^{q_k}$ so far. 
					Then these agents are chosen as $c_l^1,\ldots,c_l^{q_k}$.
				\item[Case 2] Only $n_l < q_l$ of the $m_l$ agents have not been selected so far. 
					The rest of the $m_l$ agents have been selected at $k'$ other objects. 
					Let these objects be $j_{\rho(1)},\ldots,j_{\rho(k')}$ with $\rho(l') \in \{l+1,\ldots,k\}$ for all $l' \in \{1,\ldots,k'\}$. 
					At each of the objects $j_{\rho(l')}$, $q_{\rho(l')}$ agents are selected. 
					Now there must be at least $q_l - n_l + 1$ additional agents (possibly including agent 1) consuming non-trivial shares of the objects $j_{\rho(l')}$, otherwise at most $n_l + q_{\rho(1)} + \ldots q_{\rho(k')} + q_l - n_l$ agents fully consume objects $j_l,j_{\rho(1)},\ldots,j_{\rho(k')}$. 
					This will take them until time 1, a contradiction.

					There are two possible cases for these additional $q_l-n_l$ agents (excluding agent 1).
					\begin{description}
						\item[Case 2.1] All of them are available for selection. 
							Then they are selected for the objects $j_{\rho(l')}$ of which they consume non-trivial shares, and the now free agents can be selected for $j_l$.
						\item[Case 2.1] Some of these agents are selected at some other objects $j_{\rho(k'+1)}, \ldots, j_{\rho(k'+k'')}$. 
							Then we use the free agents as in case 2.1, say $n_{l'}$. 
							Then we still need $q_l - n_l - n_{l'}$ agents for $j_l$. There must be at least $q_l + q_{\rho(1)} + \ldots + q_{\rho(k'')} + 1$ agents consuming non-trivial shares of the objects $j_l, j_{\rho(1)}, \ldots, j_{\rho(k'')}$. $q_l - n_l - n_{l'}$ are not selected for any of these objects. 
							Again there are two cases.
					\end{description}
					We repeat this argument inductively until enough agents are found who are still available and can replace agents such that the need at object $j_l$ can be satisfied. 
					This must happen, otherwise all agents selected so far as $c_{l+1}^1,\ldots,c_k^{q_k}$, some $n_l^{'''}< q_l$ agents and possibly agent 1  fully consume objects $j_l, j_{l+1},\ldots,j_k$ objects, again a contradiction.
			\end{description}
	\end{description}
	The fact that all selected agents $c^1_l, \ldots, c^{q_l}_l, l \in \{1,\ldots,k\}$ receive a non-trivial share in the objects $j_l$ implies that they each prefer $j_l$ to all other objects, except possibly $j_1,\ldots,j_{l-1}$. 
	Thus, the sequence $(c_1^1,\ldots,c_k^{q_k})$ has the properties needed for \ref{claim:RSDinvarEquiv}. 
\end{proof}
	
\begin{lemma} 
\label{lemm.rsd_nonbossy} 
For any distribution over orderings, the respective RSD is non-bossy.
\end{lemma}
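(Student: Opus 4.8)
The plan is to exploit that $\mathrm{RSD}$ is a mixture over deterministic serial dictatorships. Fix the distribution $p$ over orderings $\pi$ of the agents, and for each $\pi$ let $SD^\pi$ denote the induced deterministic serial dictatorship, so that $\mathrm{RSD}(\mathbf{t}) = \sum_\pi p(\pi)\, SD^\pi(\mathbf{t})$. The first observation I would record is that, for a fixed $\pi$, the set $S^\pi$ of objects still available when agent $i$ is called to choose depends only on $\pi$ and on the fixed reports $t_{-i}$, but not on $i$'s own report; under $SD^\pi$ agent $i$ simply receives the $t_i$-maximal element of $S^\pi$. From this, non-bossiness of $SD^\pi$ is immediate: if $i$'s chosen object is the same under two reports, then the residual supply after $i$'s turn is identical, and since every other agent acts independently of $i$'s report, the entire deterministic allocation under $\pi$ coincides.

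Next I would reduce probabilistic non-bossiness to a statement about $i$'s choice alone. Since $SD^\pi$ is non-bossy for every $\pi$, it suffices to show that if $\mathrm{RSD}_i(t_i,t_{-i}) = \mathrm{RSD}_i(t_i',t_{-i})$, then $i$'s chosen object under $SD^\pi$ is the same for $t_i$ and $t_i'$ for every $\pi$ in the support of $p$; the full allocations then coincide ordering by ordering, hence so do the mixtures. The key point is that $\mathrm{RSD}_i(t_i,t_{-i})$ is precisely the distribution, over $\pi\sim p$, of the $t_i$-maximal element of the report-independent random set $S^\pi$. Thus the hypothesis states that the $t_i$-top and the $t_i'$-top of $S^\pi$ have the same distribution, and the goal becomes the purely combinatorial claim that two strict orders inducing the same distribution of the maximal element of a common random set must select the same maximal element for $p$-almost every realization.

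This last claim is the crux and the main obstacle; I would prove it by induction on $|M|$, peeling off the top object. Let $m_1$ be the $t_i$-maximal object of all of $M$. Then $i$ receives $m_1$ under $t_i$ exactly on the event $\{m_1 \in S^\pi\}$, whereas under $t_i'$ it receives $m_1$ only on the smaller event that $m_1$ is also $t_i'$-maximal in $S^\pi$; equality of the two allocation entries at $m_1$ forces these events to coincide up to $p$-null sets, so on $\{m_1 \in S^\pi\}$ both orders select $m_1$. Consequently, for every object $j \neq m_1$ the event that either top equals $j$ is contained, up to $p$-null sets, in the complementary event $\{m_1 \notin S^\pi\}$; restricting the distributional equality to these entries yields equal conditional distributions of the tops of $S^\pi$ within $M \setminus \{m_1\}$, so the induction hypothesis applies to $t_i$ and $t_i'$ restricted to $M \setminus \{m_1\}$. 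Combining the two events shows the selected object agrees for almost every $\pi$, completing the argument. The only delicate points to get right are the report-independence of the distribution of $S^\pi$ (so that a single random set underlies both orders) and the bookkeeping that, after the base step, the events $\{\mathrm{top} = j\}$ for $j \neq m_1$ all live on $\{m_1 \notin S^\pi\}$; both are routine once stated. In the special case relevant to Lemma~\ref{lemma.RSDinvarEquiv}, where $t_i \to t_i'$ is a single adjacent swap, the peeling collapses to the observation that $i$'s choice can change in only one direction, which makes the argument essentially one line.
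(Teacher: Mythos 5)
Your proof is correct, but it takes a genuinely different route from the paper's. The paper argues by contradiction via the \emph{canonical transition}: it decomposes the change of report into adjacent swaps, invokes Theorem 1 of \citet{MennleSeuken2014PSP} (the effects of swaps in a canonical transition are never undone) to conclude that agent $i$'s allocation is unchanged at every intermediate step, and then checks ordering by ordering that a swap which leaves $i$'s allocation fixed cannot move probability mass for anyone else. You instead work directly with the mixture structure $\mathrm{RSD}=\sum_\pi p(\pi)\,SD^\pi$, isolate the report-independent random availability set $S^\pi$, observe that each $SD^\pi$ is non-bossy, and reduce the lemma to the combinatorial claim that two strict orders inducing the same distribution of the maximum of a common random set must agree on the maximum almost surely; your top-peeling induction for that claim is sound (the containment of the two events $\{i \text{ gets } m_1\}$ together with equality of their probabilities forces them to coincide on the support of $p$, and the residual analysis correctly lives on $\{m_1\notin S^\pi\}$). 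Your route is more self-contained --- it uses no strategyproofness axioms and no machinery from the companion paper --- and it delivers the slightly stronger conclusion that the deterministic allocations agree for every ordering in the support, not merely in expectation; it also handles arbitrary misreports without reduction to adjacent swaps. What the paper's route buys is brevity and uniformity: the canonical-transition argument is reused almost verbatim in Lemma \ref{lem:util_change_structure} and elsewhere in the appendix, so the authors get non-bossiness nearly for free from machinery they have already set up.
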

\begin{proof} 
Fix a distribution over orderings of the agents and let $p_{\pi}$ be the probability that ordering $\pi$ is chosen. 
Suppose that RSD is bossy, then there exists an agents $i,i'$, preference orders $P_i,P_i'$, and $P_{-i} \in \mathcal{P}^{N-i}$ such that $\text{RSD}_i(P_i,P_{-i}) = \text{RSD}_i(P_i',P_{-i})$, but $\text{RSD}_{i'}(P_i,P_{-i}) \neq \text{RSD}_{i'}(P_i',P_{-i})$. 
For the sake of brevity, we write $P$ and $P'$ for $P_i$ and $P_i'$, respectively.

Let $\text{Can}(P,P') = (P_0 = P, P_1, \ldots, P_{k-1}, P_k = P')$ be the canonical transition from $P = P_i$ to $P'=P_i'$. 
As in the proof of Lemma \ref{lem:util_change_structure}, the fact that the assignment is the same at the start and at the end of the transition implies that the assignment never changes during the transition, i.e., $\text{RSD}_i(P_l,P_{-1}) = \text{RSD}_i(P_{l+1},P_{-1})$ for all $l \in \{0,\ldots,k-1\}$. 
Recall that under strategyproof mechanisms, the effect of swaps in the canonical transition is never undone by subsequent swaps and that swaps only effect the probabilities for adjacent objects (see Theorem 1 in \citep{MennleSeuken2017PSP_WP}).
Let $\text{Can}(P,P') = (P_0 = P, P_1, \ldots, P_{k-1}, P_k = P')$ be the canonical transition from $P = P_0$ to $P'=P_k$. As in the proof of Lemma \ref{lem:util_change_structure}, the fact that the assignment is the same at the start and at the end of the transition implies that the assignment never changes during the transition, i.e., $\text{RSD}_i(P_l,P_{-1}) = \text{RSD}_i(P_{l+1},P_{-1})$ for all $l \in \{0,\ldots,k-1\}$. 
Recall that under strategyproof mechanisms, the effect of swaps in the canonical transition is never undone by subsequent swaps and that swaps only effect the probabilities for adjacent objects (see Theorem 1 in \citep{MennleSeuken2017PSP_WP}).

But the assignment changed for agent $i'$, hence it must have changed for agent $i'$ at some swap in the transition, say from $P_{l'}$ to $P_{l'+1} \in N_{P_{l'}}$. Let $j',j''$ be the objects that were swapped in this transition. Consider an ordering of the agents $\pi$ with $p_\pi > 0$. There are two cases.
\begin{itemize}
\setlength{\itemsep}{0pt}
	\item Agent $i$ gets the same object under $P_{l'}$ as under $P_{l'+1}$. Then the swap had no effect on the assignment of any other agent, i.e., under $\pi$ the swap does not change the assignment of the other agents.
	\item Agent $i$ receives $j'$ under $P_{l'}$, but $j''$ under $P_{l'+1}$. Then the swap changes the assignment of the agent that received $j''$ under $P_{l'}$. The magnitude of the change is $-p_{\pi} < 0$. This agent can be $i'$ by assumption.
\end{itemize}
However, the latter case is impossible, because this would also strictly increase agent $i$'s chances of receiving $j''$ (by $p_\pi >0$), implying $\text{RSD}_i(P_{l'},P_{-1}) \neq \text{RSD}_i(P_{l'+1},P_{-1})$, a contradiction. 
\end{proof}
This concludes the proof of Theorem \ref{thm:PS_wlv_RSD}
}

\newcommand{\ThmABMWLVStatement}{	
ABM is weakly less varying than RSD.
}
\newcommand{\ThmABMWLVProof}{
Suppose the following manipulation by agent $i$ by a swap:
\begin{eqnarray*}
	& & P_i: a_1 \succ \ldots \succ a_{m_a} \succ x \succ y \succ b_1 \succ \ldots \succ b_{m_b} \\
	& \rightsquigarrow & P_i' : a_1 \succ \ldots \succ a_{m_a} \succ y \succ x \succ b_1 \succ \ldots \succ b_{m_b}.
\end{eqnarray*}
By Lemma \ref{claim:RSDinvarEquiv}, RSD changes the assignment ($\text{RSD}_i(P_i,P_{-i}) \neq \text{RSD}_i(P_i',P_{-i})$) if and only if there exists an ordering of the agents $\pi$ such that $i$ gets to pick between $x$ and $y$ in its turn, but all objects $a_1,\ldots, a_{m_a}$ are exhausted by higher-ranking agents. 
We show that if ABM changes the assignment, then such an ordering $\pi$ exists. 
Thus, a change of assignment under ABM implies a change of assignment under RSD.

Suppose, the change of report by agent $i$ from $P_i$ to $P_i'$ changes the outcome of ABM for $i$, i.e., $\text{ABM}_i(P_i,P_{-i}) \neq \text{ABM}_i(P_i',P_{-i})$. 
Then from the proof of swap monotonicity \citep{MennleSeuken2017PSP_WP} we know that there exists an ordering of the agents $\pi'$ such that in some round (say $L$), $i$ has not been assigned an object yet, all $a_1,\ldots,a_{m_a}$ are exhausted, but neither $x$ nor $y$ are exhausted. 
Let $r(i')$ be the round in which $i'$ is assigned its object, and let
$$ R(r) = \{i' \in N~|~r(i') = r\} $$
be the set of agents who receive their assignment in round $r$ (given ordering $\pi'$). 
If $i'$ is assigned object $j$ in round $r$, $i'$ has applied to $j$ in that round. 
Thus, out of all the objects with capacity available at the beginning of round $r$, $i'$ must prefer $j$. 
Facing the same set of choices under RSD, $i'$ would also pick $j$.

Consider an ordering $\pi$ that ranks an agent $i'$ before another agent $i''$ if $r(i') < r(i'')$ and ranks them in arbitrary order if $r(i')=r(i'')$. 
Additionally, let $\pi$ rank $i$ after all the agents in the set $R(1) \cup \ldots \cup R(L-1)$. 
If RSD chooses $\pi$ as the ordering of the agents, then all agents in $R(1)$ receive their first choice (as under ABM). 
Next, all agents in $R(2)$ face the choice sets out of which they most prefer the object they were assigned under ABM. 
This continues until finally $i$ faces a choice set that includes none of the $a_1, \ldots, a_{m_a}$, but both $x$ and $y$. 
Hence, $\pi$ is the ordering we are looking for, and its existence concludes the proof. 
}

\newcommand{\ExampleRVnotWLV}{
Consider a setting $N = \{1 ,\ldots, 3\}, M =\{a,b,c\}, q_a=q_b=q_c = 1$. 
For the preference profile
\begin{eqnarray*}
	P_1  & : & a \succ b \succ c, \\
	P_2,P_3 & : & c \succ a \succ b,
\end{eqnarray*}
suppose the rank valuation is $v(1) = 10, v(2) = 6, v(3) = 0$. 
Then RV will assign $b$ to agent 1 with certainty. 
To see this suppose that agent 1 gets $a$ instead. 
Then some other agent $i$ received $b$. 
If agent 1 and agent $i$ trade, the objective increases by $6-10 + 6-0 = 2$. 
Now suppose that agent 1 gets $c$. 
Again some agent $i$ gets object $a$. 
If agent 1 and agent $i$ trade, this improves the objective by $10-0 + 6 - 10 = 6$. 
We have argued that agent 1 will get $b$ in any deterministic assignment chosen by RV with rank valuation $v$. 
Then by definition,
agent 1 must get $b$ with certainty.

Suppose now that agent 1 reports
\begin{eqnarray*}
	P_1'  & : & a \succ c \succ b
\end{eqnarray*}
instead, i.e., it swaps objects $b$ and $c$ in its report. 
Then under any rank efficient assignment (with respect to $(P_1', P_{-1})$), agent 1 will receive object $a$. 
This is because whenever agent 1 gets another object in some deterministic assignment, the objective improves if agent 1 trades with the agent who received $a$ (independent of $v$). 
Since no rank efficient assignment will give agent 1 any other object than $a$, swapping $b$ and $c$ in its report is a beneficial manipulation for agent 1. 
This is independent of its actual utility, as long as the utility is consistent with $P_1$.

Now consider the outcome of RSD: 
it is easy to see that for any ordering of the agents, if agent 1 does not receive $a$ when it gets to choose, object $c$ will not be available. 
Therefore, $\text{RSD}_1(P_1,P_{-1}) = \text{RSD}_1(P_1',P_{-1})$, i.e., RSD does not change the assignment of agent 1. 
This means that RV with the specific choice of rank valuation $v$ is not weakly less varying than RSD, and agent 1 in the given situation would want to manipulate any non-trivial hybrid of RSD and RV.}
\newcommand{\ExampleNBMnotWLV}{
Consider a setting with agents $N = \{1,\ldots,6\}$, objects $M =\{a,\ldots,f\}$, each available in unit capacity. 
Let the agents have preferences
\begin{eqnarray*}
	P_1, P_2 & : & a \succ b \succ c \succ d \succ e \succ f \\
	P_3, P_4, P_5, P_6 & : & c \succ b \succ f \succ d \succ a \succ e.
\end{eqnarray*}
Under RSD, agent $1$'s assignment is $\left(\frac{1}{2},\frac{1}{10}, 0, \frac{7}{30}, \frac{1}{6}, 0 \right)$ of objects $a$ through $f$, respectively. 
Swapping $c$ and $d$ in its report will not change its assignment under RSD. 
Under NBM and truthful reporting, the assignment is the same as under RSD. 
But if agent 1 changes its report by swapping $c$ and $d$, its assignment under NBM changes to $\left(\frac{1}{2},\frac{1}{10}, 0, \frac{2}{5}, 0, 0 \right)$. 
It strictly prefers this assignment in a first order-stochastic dominance sense. 
}
\begin{document}

{\setstretch{1}
\title{%
\LARGE{%
Hybrid Mechanisms: 
Trading Off Strategyproofness and Efficiency of \\ 
Random Assignment Mechanisms}\thanks{\scriptsize{
Department of Informatics, University of Zurich, Switzerland,
Email: \{mennle, seuken\}@ifi.uzh.ch.
We would like to thank (in alphabetical order)
Atila Abdulkadiro\u{g}lu, 
Ivan Balbuzanov, 
Craig Boutilier, 
Gabriel Carroll, 
Lars Ehlers, 
Katharina Huesmann, 
Flip Klijn, 
Antonio Miralles, 
David Parkes, 
and Utku \"Unver
for helpful comments on this work. 
Furthermore, we are thankful for the feedback we received from
various participants at 
EC'13 and the COST COMSOC Summer School on Matching '13. 
Any errors remain our own.
Part of this research was supported by research grants from the Hasler Foundation and from the Swiss National Science Foundation.}}}
\author{%
Timo Mennle \\ University of Zurich
\and Sven Seuken \\ University of Zurich }
\date{First version: February 13, 2013 \\
This version: \today}
\maketitle
\author{\setcounter{footnote}{2}Timo Mennle \and Sven Seuken}
\date{\today}

\vspace{-1em} 
\begin{abstract}
Severe impossibility results restrict the design of strategyproof random assignment mechanisms, and 
trade-offs are necessary when aiming for more demanding efficiency requirements, such as ordinal or rank efficiency. 
We introduce \emph{hybrid mechanisms}, which are convex combinations of two component mechanisms. 
We give a set of conditions under which such hybrids facilitate a non-degenerate trade-off between strategyproofness (in terms of \emph{partial strategyproofness}) and efficiency (in terms of \emph{dominance}). 
This set of conditions is tight in the sense that trade-offs may become degenerate if any of the conditions are dropped. 
Moreover, we give an algorithm for the mechanism designer's problem of determining a maximal mixing factor. 
Finally, we prove that our construction can be applied to mix Random Serial Dictatorship with Probabilistic Serial, as well as with the adaptive Boston mechanism, and we illustrate the efficiency gains numerically. 
\end{abstract}
\noindent \textbf{Keywords:}
Random Assignment, Hybrid Mechanisms, Partial Strategyproofness, Efficiency, Probabilistic Serial, Boston Mechanism, Rank Value Mechanism

\medskip
\noindent\textbf{JEL:} 
C78, 
D47, 
D82 
}

\section{Introduction}
\label{sec:intro}
\enlargethispage{1em}
When a set of indivisible goods or resources (called \emph{objects}) has to be assigned to self-interested
agents without the use of monetary transfers, we face an \emph{assignment problem}.
Examples include the assignment of students to schools, subsidized housing to tenants, and teachers to
training programs \citep{Niederle2008}. 
Since the seminal paper of \citet{Hylland1979EfficientAllocation}, this problem has attracted much attention from mechanism designers (e.g., \citet{AbdulSonmez1998RSDandTheCore,Bogomolnaia2001ANewSolution,AbdulSonmez2003OrdinalEffAndDomSets,Featherstone2011RankBasedRefinementWP}). 

It is often desirable or even required that assignment mechanisms perform well on
multiple dimensions, such as \emph{efficiency}, \emph{fairness}, and \emph{strategyproofness}. 
However, severe impossibility results prevent the design of mechanisms that excel on all these dimensions simultaneously \citep{Zhou1990ImpossibilityOneSidedMatching}. 
This makes the assignment problem a challenge for mechanism designs. 
The \emph{Random Serial Dictatorship (RSD)} mechanism is \emph{strategyproof} and \emph{anonymous},
but only satisfies the baseline requirement of \emph{ex-post efficiency}. 
If strategyproofness is relaxed to \emph{weak strategyproofness}, the more demanding requirement of \emph{ordinal efficiency} can be achieved via the \emph{Probabilistic Serial (PS)} mechanism. 
However, no ordinally efficient, symmetric mechanism can be strategyproof \citep{Bogomolnaia2001ANewSolution}. 
The even more demanding requirement of \emph{rank efficiency} can be achieved by \emph{Rank Value (RV)} mechanisms, but at the same time, rank efficiency is in conflict with even weak strategyproofness. 
Obviously, trade-offs are necessary and have been the focus of recent research (e.g., see \citep{AbdulPathakRoth2009SPvsEffWithIndifferencesNYCSchoolMatch,
Budish2012MatchingVersusMechanismDesign,
AzizBrandBrill2013OnTradeoffEffAndSPInRandSocialChoice,
AzevedoBudish2015SPL}). 

In this paper we investigate a straightforward approach to the problem of trading off strategyproofness and efficiency of random assignment mechanisms. 
Specifically, we use \emph{partial strategyproofness} \citep{MennleSeuken2017PSP_WP} to compare mechanisms by their incentive properties, and we use \emph{ordinal dominance} and \emph{rank dominance} to compare them by their efficiency properties. 
We introduce \emph{hybrid mechanisms}, which are convex combinations of two component mechanisms, and we show that, subject to a set of quite intuitive conditions, hybrid mechanisms behave exactly as one
would expect: 
they facilitate a non-degenerate trade-off between strategyproofness and efficiency. 
Instantiating this approach with popular assignment mechanisms, such as RSD, PS, RV, and variants of the Boston mechanism, we illustrate that the conditions are not trivial; but when they hold, the efficiency gains (over RSD) can be substantial. 
\subsection{Partially Strategyproof Hybrid Mechanisms}
\label{sec:intro:psp_hybrids}
Due to restrictive impossibility results pertaining to strategyproofness, we cannot hope to improve efficiency of mechanisms without relaxing strategyproofness, especially in the presence of additional fairness criteria, such as anonymity. 
In \citep{MennleSeuken2017PSP_WP} we have introduced a relaxed incentive requirement for assignment mechanisms: 
a mechanism is \emph{$r$-partially strategyproof} if truthful reporting is a dominant strategy for agents who have sufficiently different valuations for different objects. 
The numerical parameter $r$ controls the extent to which their valuations must vary across objects. 
$r$ yields a parametric measure for the strength of the incentive properties of non-strategyproof mechanisms. 
Larger values of $r \in [0,1]$ imply stronger incentive guarantees, $r=1$ corresponds to strategyproofness, and $r = 0$ does not yield any incentive guarantees. 
We use this \emph{degree of strategyproofness} to quantify the performance of mechanisms on the strategyproofness-dimensions. 
In this paper, we study how hybrid mechanisms trade off strategyproofness and efficiency. 
The key idea of \emph{hybrid mechanisms} is to ``mix'' a mechanism $\f$ that has good incentive properties and another mechanism $\g$ that has good efficiency properties. 
Concretely, for two component mechanisms $\f$ and $\g$ the $\beta$-hybrid is given by $\h = \hyb$. The parameter $\beta \in [0,1]$ is called the \emph{mixing factor}. 
Obviously, $h^0 = \f$ and $h^1 = \g$, so that the hybrid mechanisms at the extreme mixing factors $\beta = 0$ and $\beta = 1$ trivially inherit the desirable property of the respective component mechanism. 
For intermediate mixing factors $\beta \in (0,1)$ hybrids should intuitively inherit a \emph{share} of the desirable properties from both component mechanisms. 

Regarding the strategyproofness-dimension, we find that this intuition may not always be justified: 
as we show in this paper, it can happen that any non-trivial share of $\g$ (i.e., any $\beta > 0$) causes the degree of strategyproofness of the hybrid to drop to $0$ immediately. 
Our first main result is a set of sufficient conditions that prevent such ``degenerate'' behavior: 
a pair $(\f,\g)$ is \emph{hybrid-admissible} if 
\vspace{-0.4em}
\begin{enumerate}[(1)]
\setlength{\itemsep}{0pt}
	\item $\f$ is strategyproof, 
	\item $\g$ is \emph{upper invariant}: a swap of two adjacent objects in an agent's report does not change that agent's probabilities for obtaining an object it prefers to any of the two, 
	\item $\g$ is \emph{weakly less varying} than $\f$: whenever a swap leads to a change of an agent's assignment under $\g$, then that agent's assignment must also change under $\f$.
\end{enumerate}
For any hybrid-admissible pair, we show that a non-degenerate trade-off is possible in the sense that for all (even small) relaxations of strategyproofness, the share of $\g$ that can be included in the hybrid is non-trivial. 
Furthermore, we show that it is not possible to drop any of the the three conditions from hybrid-admissibility and still guarantee that hybrid mechanisms with intermediate degrees of strategyproofness can be constructed. 
\subsection{Harnessing Efficiency Improvements}
\label{sec:intro:efficiency}
Towards understanding the efficiency improvements that we can obtain through hybrid mechanisms, we
employ the well-known concepts of \emph{ordinal} and \emph{rank dominance}.
Our second main result is that if $\g$ dominates $\f$, then $\h$ dominates $\f$ but is dominated by $\g$. 
Thus, hybrids have \emph{intermediate efficiency} in a well-defined sense.  
One challenge is that a comparison of $\f$ and $\g$ by dominance may not be possible at every preference profile. 
For cases where they are incomparable at some preference profile, we show that $\h$ is not dominated by $\f$ and $\h$ does not dominate $\g$. 
In other words, the dominance comparison of $\h$, $\f$, and $\g$ \emph{points in the right direction} whenever this comparison is possible, and it \emph{never points in the wrong direction} when $\f$ and $\g$ are not comparable. 

This shows that if some mechanism $\f$ offers good incentives, and another mechanism $\g$ has desirable efficiency, then a mechanism designer can trade off strategyproofness and efficiency systematically by constructing hybrids of $\f$ and $\g$. 
Concretely, she can specify the minimal acceptable degree of strategyproofness $\underline\rho$ and then choose the mixing factor $\beta$ as high as possible. 
The \emph{maximal mixing factor} $\bmax$ is the largest value of $\beta \in [0,1]$ for which $\h$ is $\underline\rho$-partially strategyproof.
The parameter $\bmax$ has an appealing interpretation: 
it serves as a measure for how far relaxing strategyproofness from ``$r = 1$'' (i.e., strategyproofness) to ``$r \geq \underline\rho$'' (i.e., $\underline\rho$-partial strategyproofness) will take us between the baseline efficiency of $\f$ to the more desirable efficiency of $\g$. 

This ``trade-off'' approach to the design of random assignment mechanisms gives rise to a computational problem: 
given a setting (i.e., number of agents, number of objects, and object capacities), as well as a minimal acceptable degree of strategyproofness $\underline\rho$, the mechanism designer faces the problem of determining the maximal mixing factor $\bmax$. 
In this paper, we show how this problem can be solved algorithmically for any finite setting. 

\subsection{Hybrids of Popular Mechanisms}
\label{sec:intro:application}
Finally, we apply our theory of hybrids to pairs of popular mechanisms. 
First, we show that Random Serial Dictatorship (RSD) and the Probabilistic Serial (PS) mechanism form a hybrid-admissible pair. 
Since PS is ordinally efficient, it ordinally dominates RSD whenever the two mechanisms are comparable. 
Therefore, hybrids of RSD and PS can be used to trade off strategyproofness and efficiency in terms of ordinal dominance. 

Second, we show two impossibility result: neither the classic Boston mechanism (NBM),\footnote{We call this variant the \emph{na\"ive} Boston mechanism because it is ``na\"ive'' in the sense that it lets agents apply to exhausted objects in the application process \citep{MennleSeuken2017ABM_WP}.} nor Rank Value (RV) mechanisms are weakly less varying than RSD. 
Furthermore, we demonstrate that hybrids of RSD with NBM or RV indeed have degenerate incentive properties (i.e., they have a degree of strategyproofness of $0$). 
These findings illustrate that while hybrid-admissibility is sufficient for non-degenerate trade-offs, it is also close to being necessary. 
On a broader scale, these impossibility results serve as reminders that straightforward approaches, like the construction of hybrid mechanisms, do not always yield the seemingly obvious outcomes that our intuition may suggest. 

Third, we show that the pair of RSD and the adaptive Boston Mechanism (ABM)\footnote{ABM is a variant of the Boston school choice mechanisms in which students automatically skip exhausted schools in the application process; see \citep{MennleSeuken2017ABM_WP}.} 
is also hybrid-admissible. 
ABM rank dominates RSD whenever the outcomes are comparable, except in a negligibly small number of cases \citep{MennleSeuken2017ABM_WP}. 
Thus, hybrids of RSD and ABM allow non-degenerate trade-offs between strategyproofness and efficiency in terms of the rank dominance relation (except for the small number of cases). 
For both pairs (RSD,PS) and (RSD,ABM), we find numerically that efficiency gains (in terms of the maximal mixing factor) from relaxing strategyproofness can be substantial. 

%
%
%
%
%
%
%
%
%
%
%
%
%
%
%
%
%
%
%
%
%

\medskip
\noindent
\textbf{Organization of this paper:} 
In Sections \ref{sec:related} and \ref{sec:model}, we discuss related work and introduce our formal model. 
In Section \ref{sec:psphybrids}, we introduce hybrid-admissibility and show that it enables the construction of partially strategyproof hybrids.\footnote{We emphasize that the partial strategyproofness concept imported from \citep{MennleSeuken2017PSP_WP} in Section \ref{sec:psphybrids:spconcepts} should \emph{not} be considered a contribution of the present paper.}
In Section \ref{sec:efficiency}, we show how hybrids trade off strategyproofness and efficiency, and we give an algorithm for the mechanism designer's problem of determining a maximal mixing factor. 
In Section \ref{sec:applications}, we apply our results to popular assignment mechanisms. 
Section \ref{sec:conclusion} concludes.

\section{Related Work}
\label{sec:related}
\citet{Hylland1979EfficientAllocation} proposed a pseudo-market mechanism for the problem of assigning students to on-campus housing. 
However, eliciting agents' cardinal preferences is often difficult if not impossible to in settings without money. 
For this reason, recent work has focused on \emph{ordinal} mechanisms, where agents submit rankings over objects. 
\citet{Carroll2011ElicitingOrdinalPreferences}, \citet{HuesmanWambach2015MatchingWithMoralConcerns}, and \citet{EhlersMajumdarMishraSen2016ContinuityandIncentiveCompatibilityinCardinalVotingMechanisms} gave systematic arguments for the focus on ordinal mechanisms. 

For the deterministic case, strategyproofness of assignment mechanisms has been studied extensively, e.g., in \citep{Papai2000SPAssignmentHierarchicalExchange,EhlersKlaus2006ReallocConsistent,KlausEhlers2007Consistent,Hatfield2009StrategyproofEfficientAndNonbossyQuotaAllocations,PyciaUenver2014ICallocation}. 
For non-deterministic mechanisms, \citet{AbdulSonmez1998RSDandTheCore} showed that RSD is equivalent to the Core from Random Endowments mechanism for house allocation (if agents' initial houses are drawn uniformly at random). 
\citet{Erdil2014SPStochasticAssignment} showed that when capacity exceeds demand, RSD is not the only strategyproof, ex-post efficient mechanism that satisfies symmetry. 
On the other hand, \citet{Bade2014RSDOneAndOnly} showed that taking any ex-post efficient, strategyproof, non-bossy, deterministic mechanism and assigning agents to roles in the mechanism uniformly at random is equivalent to using RSD. 
However, when capacity equals demand, it is still an open conjecture whether RSD is the unique mechanism that is strategyproof, ex-post efficient, and symmetric \citep{LeeSethuraman2011EquivalenceResultsAllocationUnifiedView}. 
Despite the fact that this conjecture remains to be proven, is evident that the space of ``useful'' strategyproof mechanisms is very small. 

The research community has also introduced stronger efficiency concepts, such as \emph{ordinal efficiency}. 
The Probabilistic Serial (PS) mechanism, which achieves ordinal efficiency, was originally introduced by \citet{Bogomolnaia2001ANewSolution} for strict preferences and since then it has been studied extensively: 
\citet{KattaSethuraman2006} introduced an extension that allows agents to be indifferent between objects. 
\citet{Hashimoto2013TwoAxiomaticApproachestoPSMechanisms} showed that PS with equal eating speads is the unique mechanism that is ordinally fair and non-wasteful. 
In terms of incentives, \citet{Bogomolnaia2001ANewSolution} showed that PS is not strategyproof but \emph{weakly strategyproof} in the sense that no agent can obtain a first order-stochastically dominant assignment by misreporting. 

While ex-post efficiency and ordinal efficiency are the two most well-studied efficiency concepts for assignment mechanisms, some mechanisms used in practice aim at \emph{rank efficiency}, which is a further refinement of ordinal efficiency \citep{Featherstone2011RankBasedRefinementWP}. 
However, no rank efficient mechanism can even be weakly strategyproof. 
Other popular mechanisms, like the Boston mechanism (see \citep{ErginSonmez2006GamesUnderBostonSchoolChoice}), are manipulable but are nevertheless in frequent use. 
\citet{Budish2012TheMultiUniAssignmentProblemTheoryAndEvidence} showed evidence from course allocation at Harvard Business School, suggesting that using a non-strategyproof mechanism may lead to higher social welfare than using a strategyproof mechanism such as RSD. 
This challenges the view that strategyproofness should be a hard requirement for mechanism design. 

Given that strategyproofness is so restrictive, some researchers have considered relaxed incentive requirements. 
For example, \citet{Carroll2013AQuantitativeApproachToIncentives} used approximate strategyproofness for normalized vNM utilities in the voting domain to quantify the incentives to manipulate under different non-strategyproof voting rules. 
\citet{Budish2011TheCombinatorialAssignmentProblem} proposed the Competitive Equilibrium from Approximately Equal Incomes mechanism for the combinatorial assignment problem. 
For the random social choice domain, \citet{AzizBrandBrill2013OnTradeoffEffAndSPInRandSocialChoice} considered first order-stochastic dominance (SD) and sure thing (ST) dominance. 
They showed that while RSD is SD-strategyproof, it is merely ST-efficient; 
they contrasted this with Strictly Maximal Lotteries, which are SD-efficient but only ST-strategyproof. 

The construction of hybrid mechanisms in the present paper differs from these approaches: 
rather than comparing discrete points in the design space, we enable a continuous trade-off between
strategyproofness and efficiency that can be described in terms of two parameters: 
the \emph{degree of strategyproofness} \citep{MennleSeuken2017PSP_WP} for incentive properties and the \emph{mixing factor} for efficiency. 
Formally, hybrid mechanisms are simply probability mixtures of two component mechanisms. 
\citet{Gibbard1977ManipulationOfVotingWithChance} used such mixtures in his seminal characterization of the set of strategyproof random ordinal mechanisms. 
In \citep{MennleSeuken2017EFF_WP}, we have extended \citeauthor{Gibbard1977ManipulationOfVotingWithChance}'s result by giving a structural characterization of the Pareto frontier of \emph{approximately} strategyproof random mechanisms. 
Hybrid mechanisms play a central role in our characterization. 
The present paper differs from \citep{MennleSeuken2017EFF_WP} in that we consider the random assignment domain specifically, and we employ \emph{partial strategyproofness}, which is a more appropriate relaxation of strategyproofness in this domain than approximate strategyproofness. 
\section{Formal Model}
\label{sec:model}
A \emph{setting} \NMQ\ consists of a set $N$ of $n$ \emph{agents}, a set $M$ of $m$ \emph{objects}, and a vector $\bm q = (q_1, \ldots, q_m)$ of \emph{capacities} (i.e., there are $q_j$ units of object $j$).
We assume that supply satisfies demand (i.e., $n \leq \sum_{j \in M} q_j$), since we can always add a dummy object with capacity $n$.
Each agent $i \in N$ has a strict preference order $P_i$ over objects, where $P_i:a\succ b$ means that $i$ strictly prefers object $a$ to object $b$. 
We denote the set of all preference orders by $\mathcal{P}$.
A \emph{preference profile} $\bm P = (P_1 , \ldots, P_n) \in \mathcal{P}^N$ is a collection of preference orders of all agents, where $P_{-i} \in \mathcal{P}^{N\backslash\{i\}}$ are the preference orders of all agents, except $i$. 
Agents' preferences over objects are extended to preferences over lotteries via \emph{von Neumann-Morgenstern (vNM) utilities} $u_i:M\rightarrow\mathbb{R}^+$.
A utility function $u_i$ is \emph{consistent} with preference order $P_i$ if $P_i:a \succ b$ whenever $u_i(a) > u_i(b)$. 
We denote by $U_{P_i}$ the set of all utility functions that are consistent with $P_i$.

In the \emph{random assignment} problem, each agent ultimately receives a single object, but we evaluate random mechanisms based on the resulting \emph{interim assignments}. 
Such assignments are represented by an $n\times m$-matrix $x = (x_{i,j})_{i \in N, j \in M}$ satisfying the \emph{fulfillment constraint} $\sum_{j\in M} x_{i,j} = 1$, the \emph{capacity constraint} $\sum_{i\in N} x_{i,j} \leq q_j$, and the \emph{probability constraint} $x_{i,j} \in [0,1]$ for all $i\in N,j \in M$. 
The entries of the matrix $x$ are interpreted as probabilities, where $x_{i,j}$ is the probability that $i$ gets $j$.
An assignment is \emph{deterministic} if all agents get exactly one full object, such that $x_{i,j} \in \{0,1\}$ for all $i \in N, j \in M$.
For any agent $i$, the $i$th row $x_i = (x_{i,j})_{j \in M}$ of the matrix $x$ is called the \emph{assignment vector of $i$}, or \emph{$i$'s assignment} for short.
The Birkhoff-von Neumann Theorem \citep{Birkhoff1946,Neumann1953} and its extensions \citep{Budishetal2013DesignRandomAllocMechsTheoryAndApp} ensure that, given any probabilistic assignment, we can always find a lottery over deterministic assignments that implements its marginal probabilities.
Finally, we denote by $X$ and $\Delta(X)$ the spaces of all deterministic and probabilistic assignments, respectively.

A \emph{mechanism} is a mapping $ \varphi: \mathcal{P}^N \rightarrow \Delta(X)$ that chooses an assignment based on a profile of reported preference orders.
$\varphi_i(P_i,P_{-i})$ is the assignment vector that agent $i$ receives if it reports $P_i$ and the other agents report $P_{-i}$.
Note that mechanisms only receive rank ordered lists as input but no additional cardinal information.
Thus, we consider \emph{ordinal} mechanisms, which determine assignments based on the ordinal preference reports alone.
The expected utility for $i$ is given by the scalar product
\begin{equation}
	\mathbb{E}_{\varphi_i(P_i,P_{-i})}[u_i] = \left\langle u_i,\varphi_i(P_i,P_{-i})\right\rangle = \sum_{j \in M} u_i(j) \cdot \varphi_{i,j}(P_i,P_{-i}).
\end{equation}
Finally, we define hybrid mechanisms, which we study in this paper. 
\enlargethispage{1em}
\begin{definition}[Hybrid]
For mechanisms $\f,\g$ and a \emph{mixing factor} $\beta \in [0,1]$, the \emph{$\beta$-hybrid of $\f$ and $\g$} is given by $\h = \hyb$, where for all preference profiles $\bm P \in \mathcal{P}^N$, the assignment $\h(\bm P)$ is the $\beta$-convex combination of the assignments of $\f(\bm P)$ and $\g(\bm P)$. 
%
%
\end{definition}
\section{Partially Strategyproof Hybrid Mechanisms}
\label{sec:psphybrids}
In this section, we first provide a short overview of the partial strategyproofness concept, which we have introduced in \citep{MennleSeuken2017PSP_WP}. 
We then give our first main result, a set of conditions under which the construction of hybrid mechanisms with non-degenerate degrees of strategyproofness is possible. 
Subsequently, we show that none of the conditions can be dropped without losing this guarantee. 
\subsection{Full and Partial Strategyproofness} 
\label{sec:psphybrids:spconcepts}
Under a strategyproof mechanism, agents have a dominant strategy to report truthfully. 
For random mechanisms, this means that truthful reporting of ordinal preferences maximizes any agent's expected utility, independent of the reports of the other agents and the particular utility function underlying that agent's preference order. 
\begin{definition}[Strategyproof] 
\label{def:sp}
A mechanism $\f$ is \emph{strategyproof} if for any agent $i\in N$, any preference profile $(P_i,P_{-i})\in \mathcal{P}^N$, any misreport $P_i' \in \mathcal{P}$, and any utility function $u_i \in U_{P_i}$ that is consistent with $P_i$, 
we have 
\begin{equation}
	\left\langle u_i,\f_i(P_i,P_{-i}) - \f_j(P_i',P_{-i}) \right\rangle \geq 0
\end{equation}
\end{definition}
This notion of strategyproofness for random mechanism coincides with the one used by \citet{Gibbard1977ManipulationOfVotingWithChance} for random voting mechanisms. 
For deterministic mechanisms, it reduces to the requirement that no agent can obtain a strictly preferred object by misreporting. 
Furthermore, it is equivalent to \emph{strong stochastic dominance-strategyproofness}, which requires that any agent's assignment from misreporting is first order-stochastically dominated by the assignment that the agent can obtain from reporting truthfully. 

\emph{Partially strategyproof} mechanisms \citep{MennleSeuken2017PSP_WP} remain strategyproof on a particular domain restriction. 
The agents can still have any preference order, but their underlying utility functions are constrained. 
\begin{definition}[Uniformly Relatively Bounded Indifference] 
\label{def:urbi}
For $r \in [0,1]$, a utility function $u \in U_P$ satisfies \emph{uniformly relatively bounded indifference} with respect to \emph{indifference bound} $r$ (\URBIr) if for any objects $a,b \in M$ with $P: a \succ b$, we have
\begin{equation}
		r\cdot\left(u(a) - \min_{j \in M} u(j)\right) \geq u(b) - \min_{j \in M} u(j).
	\label{eq:urbi}
\end{equation}
\end{definition}
\enlargethispage{1em}
To obtain some intuition about this domain restriction, observe that a utility function $u: M \rightarrow \mathbb{R}^+$ can be interpreted as a vector in $\left(\mathbb{R}^+\right)^m$.
The set $U_P$ corresponds to a convex cone containing all the vectors for which the $a$-component is strictly larger than the $b$-component (provided $P:a\succ b$). 
Then the set of utility functions that satisfy \URBIr\ and are consistent with $P$ corresponds to a smaller cone inside $U_P$. 
This smaller cone is strictly bounded away from the indifference hyperplanes $H_{a,b} = \{u(a) = u(b)\}$ for any two objects $a,b \in M$. 
Note that the \URBIr\ constraint is independent of affine transformations: 
if $u$ is translated by adding a constant $\delta$ (i.e., $\tilde{u}(j) = u(j) + \delta$ for all $j\in M$), then this value will be subtracted again in (\ref{eq:urbi}), since $\min_{j \in M} \tilde{u}(j) = \min_{j\in M} u(j) + \delta$. 
If $u$ is scaled by a factor $\alpha > 0$, then this affects both sides of (\ref{eq:urbi}) equally, so that the \emph{relative} bound $r$ is preserved. 

For convenience, we denote by \URBIr\ the set of all utility functions that satisfy uniformly relatively bounded indifference with respect to $r$. 
With this domain restriction, the definition of partial strategyproofness is analogous to the definition of strategyproofness, except that the inequality only needs to hold for agents with utility functions in \URBIr. 
\begin{definition}[Partially Strategyproof]
\label{def:psp}
For a given setting \NMQ\ and $r \in [0,1]$ we say that a mechanism $\f$ is \emph{$r$-partially strategyproof in \NMQ} if for any agent $i\in N$, any preference profile $(P_i,P_{-i})\in \mathcal{P}^N$, any misreport $P_i' \in \mathcal{P}$, and any utility function $u_i \in U_{P_i} \cap \URBIr$ that is consistent with $P_i$ and satisfies \URBIr, 
we have 
\begin{equation}
	\left\langle u_i,\f_i(P_i,P_{-i}) - \f_j(P_i',P_{-i}) \right\rangle \geq 0.
\end{equation}
\end{definition}
For the remainder of the paper we will fix an arbitrary setting \NMQ. 
Thus, we will simply say that a mechanism is \emph{$r$-partially strategyproof}, omitting the setting but keeping in mind that the value of $r$ is specific to the respective setting. 

One of the main findings in \citep{MennleSeuken2017PSP_WP} is that strategyproofness can be decomposed into three simple axioms, and that the set of partially strategyproof mechanisms arises by dropping the least important of these axioms.
The three axioms restrict the way in which a mechanism may change an agent's assignment when this agent changes its report. 
For any preference order $P \in \mathcal{P}$, its \emph{neighborhood} $N_P$ is the set of preference orders that can be obtained by swapping two objects that are adjacent in $P$. 
Formally, for $P: a_1 \succ \ldots\succ a_m$ we have 
\begin{equation}
	N_P = \left\{P' \in \mathcal{P} ~\left|~ 
		\begin{array}{l}
			P':a_1 \succ \ldots\succ a_{k-1} \succ a_{k+1} \succ a_k \succ a_{k+2} \succ \ldots \succ a_m \\	
			\text{ for some }k \in \{1,\ldots,m-1\}
		\end{array}\right.\right\}.
\end{equation}
\begin{definition}[Swap Monotonic]
\label{def:sm}
A mechanism $\f$ is \emph{swap monotonic} if for any agent $i\in N$, any preference profile $(P_i,P_{-i}) \in \mathcal{P}^N$, and any misreport $P_i' \in N_{P_i}$ from the neighborhood of $P_i$ with $P_i: a \succ b$ and $P_i': b \succ a$, one of the following holds: 
\vspace{-0.4em}
\begin{itemize}
\setlength{\itemsep}{0pt}
	\item either $\f_i(P_i,P_{-i}) = \f_i(P_i',P_{-i})$, 
	\item or $\f_{i,a}(P_i,P_{-i}) > \f_{i,a}(P_i',P_{-i})$ and $\f_{i,b}(P_i,P_{-i}) < \f_{i,b}(P_i',P_{-i})$.
\end{itemize}
\end{definition}
\begin{definition}[Upper Invariant]
\label{def:ui}
A mechanism $\f$ is \emph{upper invariant} if for any agent $i\in N$, any preference profile $(P_i,P_{-i}) \in \mathcal{P}^N$, and any misreport $P_i' \in N_{P_i}$ from the neighborhood of $P_i$ with $P_i: a \succ b$ and $P_i': b \succ a$, we have that $i$'s assignment for objects from the upper contour set of $a$ does not change (i.e., $\f_{i,j}(P_i,P_{-i}) = \f_{i,j}(P_i',P_{-i})$ for all $j\in M$ with $P_i : j \succ a$).
\end{definition}
\begin{definition}[Lower Invariant]
\label{def:li}
A mechanism $\f$ is \emph{lower invariant} if for any agent $i\in N$, any preference profile $(P_i,P_{-i}) \in \mathcal{P}^N$, and any misreport $P_i' \in N_{P_i}$ from the neighborhood of $P_i$ with $P_i: a \succ b$ and $P_i': b \succ a$, we have that $i$'s assignment for objects from the lower contour set of $b$ does not change (i.e., $\f_{i,j}(P_i,P_{-i}) = \f_{i,j}(P_i',P_{-i})$ for all $j\in M$ with $P_i : b \succ j$).
\end{definition}
Swap monotonicity means that if the mechanism changes an agent's assignment after this agent has swapped two adjacent objects in its report, then this change of assignment must be \emph{direct} and \emph{responsive}: 
if there is any change at all, there must be some change for the objects for which differential preferences have been reported, and this change has to be in the right direction. 
Upper invariance means that an agent cannot improve its chances at more preferred objects by changing the order of less preferred objects. 
In the presence of an outside option, this is equivalent to robustness to manipulation by truncation \citep{Hashimoto2013TwoAxiomaticApproachestoPSMechanisms}. 
Finally, lower invariance is the natural counterpart for upper invariance. 
Strategyproofness decomposes into these three axioms, and partial strategyproofness arises by dropping lower invariance. 
\begin{fact}[\citeauthor{MennleSeuken2017PSP_WP}, \citeyear{MennleSeuken2017PSP_WP}] 
\label{fact:thm1}
A mechanism is strategyproof \emph{if and only if} it is swap monotonic, upper invariant, and lower invariant. 
\end{fact}
\begin{fact}[\citeauthor{MennleSeuken2017PSP_WP}, \citeyear{MennleSeuken2017PSP_WP}] 
\label{fact:thm2}
Given a setting \NMQ, a mechanism is $r$-partially strategyproof for some $r>0$ \emph{if and only if} it is swap monotonic and upper invariant. 
\end{fact}
Furthermore, the \URBIr\ domain restriction is \emph{maximal} in the sense that for a swap monotonic, upper invariant mechanism, there is no systematically larger set of utility functions for which we can also guarantee that truthful reporting is a dominant strategy. 
This allows us to define a meaningful, parametric measure for the incentive properties of non-strategyproof mechanisms.
\begin{definition}[Degree of Strategyproofness]
\label{def:dosp}
Given a setting \NMQ\ and a mechanism $\f$, the \emph{degree of strategyproof} of $\f$ is the largest indifference bound $r \in [0,1]$ for which $\f$ is $r$-partially strategyproof. 
Formally, 
\begin{equation}
	\rho_{(N,M,\bm q)}(\f) = \max \{r \in [0,1] ~|~\f\text{ is $r$-partially strategyproof}\}.
\end{equation}
\end{definition}
By virtue of the maximality of the \URBIr\ domain restriction, it is meaningful to compare mechanisms by their degree of strategyproofness. 
This comparison is consistent with (but not equivalent to) the comparison of mechanisms by their vulnerability to manipulation \citep{Pathak2013AERComparingMechanismsByVulnerability}. 
In this paper, we use the degree of strategyproofness to measure and compare the performance of mechanisms on the strategyproofness-dimension. 
\subsection{Construction of Partially Strategyproof Hybrids}
\label{sec:psphybrids:construction}
To state our first main result, we define what it means for one mechanism $\g$ to be \emph{weakly less varying} than another mechanism $\f$. 
This condition is part of our subsequent definition of \emph{hybrid-admissibility}. 
\begin{definition}[Weakly Less Varying]
\label{def:wlv}
For mechanisms $\f,\g$, we say that \emph{$\g$ is weakly less varying than $\f$} if for any agent $i\in N$, any preference profile $(P_i,P_{-i}) \in \mathcal{P}^N$, and any report $P_i' \in N_{P_i}$ from the neighborhood of $P_i$ we have that 
\begin{equation}
	\f_i(P_i,P_{-i}) = \f_i(P_i',P_{-i})~\Rightarrow~\g_i(P_i,P_{-i}) = \g_i(P_i',P_{-i}).
\end{equation}
\end{definition}
Loosely speaking, this means that the mechanism $\g$ (as a function of preference profiles) must be at least as \emph{coarse} as $\f$. 
If $\f$ does not change $i$'s assignment, then a weakly less varying mechanism $\g$ must not change it either.  
This is important for the incentive properties of hybrids: 
suppose that some misreport by some agent is beneficial under $\g$. 
If for the same misreport, $\f$ does not change that agent's assignment, then any share of $\f$ in the hybrid is insufficient to counteract the benefit that the agent obtains from this manipulation. 

We are now ready to formulate hybrid-admissibility. 
\begin{definition}[Hybrid-Admissible]
\label{def:hybridadmissible}
A pair $(\f,\g)$ is \emph{hybrid-admissible} if 
\vspace{-0.4em}
\begin{enumerate}[(1)]
\setlength{\itemsep}{0pt}
	\item $\f$ is strategyproof,
	\item $\g$ is upper invariant, 
	\item $\g$ is weakly less varying than $\f$.
\end{enumerate}
\end{definition}
The following Theorem \ref{thm:construction} is our first main result. 
It shows that under hybrid-admissibility, the degree of strategyproofness $\rho(\h)$ of hybrid mechanisms varies in a non-degenerate fashion for varying mixing factors $\beta \in [0,1]$. 
\begin{theorem}
\label{thm:construction}
\ThmConstructionStatement
\end{theorem}
\begin{proof}[Proof Outline (formal proof in Appendix \ref{app:proofs:construction})]
Consider agent $i$ with $P_i: a_1 \succ \ldots \succ a_m$ and a misreport $	P_i: a_1 \succ \ldots \succ a_K \succ a_{K+1}' \succ \ldots \succ a_m'$, where the positions of the first $K$ objects remain unchanged. 
The key insight is that we only need to consider cases where $i$'s assignment of $a_{K+1}$ \emph{strictly decreases} under $\f$. 
If $i$ receives less of $a_{K+1}$, this has a negative effect on $i$'s expected utility from reporting $P_i'$. 
We show that for utility functions in \URBIr\ and sufficiently small $\beta > 0$, this negative effect suffices to make the misreport $P_i'$ useless. 
Finally, $\beta > 0$ can be chosen uniformly for all preference profiles and misreports (while it may depend on the mechanisms and the setting).
\end{proof}

Theorem \ref{thm:construction} confirms our intuition about the manipulability of hybrids $\h$ when $\beta$ varies between 0 and 1. 
For mechanism designers, this result is good news: 
given any setting, a hybrid-admissible pair of mechanisms and a minimal acceptable degree of strategyproofness $\underline\rho \in [0,1)$, we can always find a non-trivial hybrid (i.e., $\h$ with $\beta > 0$) that is $\underline\rho$-partially strategyproof. 
The fact that a \emph{strictly positive} $\beta$ can be chosen implies that any (even small) relaxation of strategyproofness can enable improvements on the efficiency-dimension. 

If $\g$ is a more efficient mechanism, then a mechanism designer would intuitively like to choose a mixing factor as large as possible because more of the more efficient $\g$ would be included. 
In Section \ref{sec:efficiency} we give a precise understanding of the way in which mixing affects the efficiency of hybrids, and we show that the mechanism designer's problem of determining a \emph{maximal mixing factor} can be solved algorithmically. 
\subsection{Independence of Hybrid-Admissibility for Theorem \ref{thm:construction}}
\label{sec:psphybrids:tightness}
We have seen that under hybrid-admissibility, the degree of strategyproofness of hybrid mechanisms in fact behave as our intuition suggests.  
Next, we show that dropping either of the three conditions from hybrid-admissibility will lead to a collapse of this guarantee. 
\begin{proposition}
\label{prop:tightness:fsp}
If $\f$ is not strategyproof, there exists a mechanism $\g$ that is upper invariant and weakly less varying than $\f$, and a bound $r \in (0,1)$, such that no non-trivial hybrid of the pair $(\f,\g)$ will be $r$-partially strategyproof.
\end{proposition}
\begin{proof}
Consider a constant mechanism $\g$ that yields the same assignment, independent of the agents' reports. If $\f$ is manipulable by some agent $i$ with utility $u_i$, we choose $r$ such that $u_i \in \URBIr$. 
Then $i$ can manipulate any non-trivial hybrid of $\f$ and $\g$.
\end{proof}
\begin{proposition}
\label{prop:tightness:gui}
For any strategyproof $\f$ and any $\g$ that is weakly less varying than $\f$, but not upper invariant, no non-trivial hybrid of the pair $(\f,\g)$ is $r$-partially strategyproof for any bound $r \in (0,1]$.
\end{proposition}
\begin{proof} 
Since $\f$ is strategyproof, it must be upper invariant by Fact \ref{fact:thm1}. 
If $\g$ is not upper invariant, then neither is any non-trivial hybrid of $\f$ and $\g$. 
Consequently, the hybrid is not $r$-partially strategyproof for any $r > 0$ by Fact \ref{fact:thm2}. 
\end{proof}
\begin{proposition}
\label{prop:tightness:gwlvf}
Let $\f$ be a strategyproof mechanism such that for some agent $i\in N$, some preference profile $(P_i,P_{-i})$, and some misreport $P_{i}'$, we have $\f_i(P_i,P_{-i}) = \f_i(P_i',P_{-i})$. 
Then there exits an upper invariant mechanism $\g$ such that no non-trivial hybrid of the pair $(\f,\g)$ is $r$-partially strategyproof for any bound $r \in (0,1]$
\end{proposition}
\begin{proof} 
Let $j$ be the best choice object under $P_i$ that changes position between $P_i$ and $P_i'$. 
Then let $\g$ be upper invariant with $\g_{i,j}(P_i,P_{-i}) = 0$ and $\g_{i,j}(P_i',P_{-i}) = 1$. 
If $\beta > 0$, then $i$ can manipulate the hybrid $\h$ in a first order-stochastic dominance sense. 
Therefore, $\h$ cannot be partially strategyproof by Proposition 2 in \citep{MennleSeuken2017PSP_WP}.
\end{proof}
In combination, Propositions \ref{prop:tightness:fsp}, \ref{prop:tightness:gui}, and \ref{prop:tightness:gwlvf} show that none of the three requirements for hybrid admissibility can be dropped, or else the relaxed incentive properties of the hybrid mechanisms may be degenerate. 

In Section \ref{sec:applications} we prove hybrid admissibility for pairs of Random Serial Dictatorship and Probabilistic Serial, as well as Random Serial Dictatorship with the adaptive Boston mechanisms. 
In contrast, for the na\"ive Boston mechanism and Rank Value mechanisms we show that neither is weakly less varying than Random Serial Dictatorship. 
Furthermore, hybrids of these mechanisms will have a degree of strategyproofness of 0 (unless $\beta=0$). 
This illustrates that while hybrid admissibility is a sufficient condition, it is also close to being necessary for non-degenerate trade-offs. 
\section{Parametric Trade-offs Between Strategyproofness and Efficiency}
\label{sec:efficiency}
We have obtained a good understanding of the incentive properties of hybrid mechanisms. 
However, ultimately, we are interested in the trade-off between strategyproofness and efficiency. 
To this end, we need to understand the efficiency properties of hybrids. 
We first review three notions of dominance, namely ex-post, ordinal, and rank dominance, and the corresponding efficiency requirements. 
We then show that, loosely speaking, hybrid mechanisms inherit a share of the efficiency advantages of the more efficient component, and this share is proportional to the mixing factor $\beta$. 
\subsection{Dominance and Efficiency Notions}
\label{sec:efficiency:notions}
Ex-post efficiency is ubiquitous in assignment problems. 
Most assignment mechanisms considered in theory and applications are ex-post efficient, such as Random Serial Dictatorship, Probabilistic Serial, Rank Value mechanisms, and variants of the Boston mechanism. 
Ex-post efficiency requires that \emph{ex-post}, when every agent finally holds one object, no Pareto improvements are possible by re-assigning objects. 
\begin{definition}[Ex-post Efficient] 
\label{def:expost_eff} 
Given a preference profile $\bm P \in \mathcal{P}^N$, a deterministic assignment $x$ \emph{ex-post dominates} another deterministic assignment $y$ \emph{at} $\bm P$ if all agents weakly prefer their assigned object under $x$ to their assigned object under $y$. 
The dominance is \emph{strict} if at least one agent strictly prefers its assigned object under $x$. 
A deterministic assignment $x$ is \emph{ex-post efficient at} $\bm P$ if it is not strictly ex-post dominated by any other deterministic assignment at $\bm P$.
Finally, a random assignment is \emph{ex-post efficient at $\bm P$} if it has a lottery decomposition consisting only of deterministic assignments that are ex-post efficient at $\bm P$.
\end{definition}
To compare random assignments by their efficiency, we draw on notions of dominance for random assignments. 
\begin{definition}[Ordinally Efficient] 
\label{def:fosd_and_ordinal_eff} 
For a preference order $P : a_1 \succ \ldots \succ a_m$ and two assignment vectors $v = v_{j \in M}$ and $ w = w_{j \in M}$, we say that $v$ \emph{first order-stochastically dominates} $w$ at $P$ if for all ranks $k \in \{1,\ldots,m\}$ we have 
\begin{equation}
\sum_{j \in M: j \succ a_k} v_{j} \geq \sum_{j \in M: j \succ a_k} w_{j}.
\label{eq:fosd}
\end{equation}
For a preference profile $\bm P$, an assignment $x$ \emph{ordinally dominates} another assignment $y$ \emph{at} $\bm P$ if for all agents $i \in N$, the assignment vector $x_i$ first order-stochastically dominates $y_i$ at $P_i$. 
$x$ \emph{strictly} ordinally dominates $y$ at $\bm P$ if in addition inequality (\ref{eq:fosd}) is strict for some agent $i \in N$ and some rank $k \in \{1,\ldots,m\}$.
Finally, $x$ is \emph{ordinally efficient at} $\bm P$ if it is not strictly ordinally dominated by any other assignment at $\bm P$.
\end{definition}
If $x$ ordinally dominates $y$ at $\bm P$ and $\bm P$ is the true preference profile of the agents, then all agents will prefer $x$ to $y$, independent of their underlying utility functions. 
\citet{Bogomolnaia2001ANewSolution} showed that the Probabilistic Serial mechanism produces ordinally efficient assignments (at the reported preference profiles). 
Moreover, these assignments may strictly ordinally dominate the assignments obtained from Random Serial Dictatorship at the same preference profiles. 

\citet{Featherstone2011RankBasedRefinementWP} introduced a strict refinement of ordinal efficiency, called \emph{rank efficiency}, and he developed Rank Value mechanisms that produce rank efficient assignments.
\begin{definition}[Rank Efficient] 
\label{def:rank_dom_eff} 
For a preference profile $\bm P$ let $\text{ch}_{P_i}(k)$ denote the $k$th choice object of the agent $i$ with preference order $P_i$. 
The \emph{rank distribution} of an assignment $x$ at $\bm P$ is a vector $\bm{d}^x = (d_1^x,\ldots,d_m^x)$ with
\begin{equation}
d_k^x = \sum_{i \in M} x_{i,\text{ch}_{P_i}(k)} \text{ for }k\in \{1,\ldots,m\}.
\label{eq:def_rank_dist}
\end{equation}
$d_k^x$ is the expected number of $k$th choices assigned under $x$ with respect to preference profile $\bm P$.
An assignment $x$ \emph{rank dominates} another assignment $y$ at $\bm P$ if $\bm{d}^x$ first order-stochastically dominates $\bm{d}^y$ (i.e., $\sum_{k=1}^r d_k^x - d_k^y\geq 0 $ for all $r \in \{1,\ldots,m\}$).
$x$ \emph{strictly rank dominates} $y$ at $\bm P$ if this inequality is strict for some rank $r\in\{1,\ldots,m\}$.
$x$ is \emph{rank efficient} at $\bm P$ if is not strictly rank dominated by any other assignment at $\bm P$.
\end{definition}
Rank dominance captures the intuition that, for society as a whole, assigning two first choices and one second choice is preferable to assigning one first and two second choices. 
Rank efficient mechanisms in the assignment domain correspond to \emph{positional scoring rules} in the social choice domain \citep{XiaConitzer2008GeneralizedScoringRulesAndFrequencyOfCoalitionalManipulability} because they can be interpreted as maximizing an aggregate score based on ranks \citep{Featherstone2011RankBasedRefinementWP}. 
\subsection{Efficiency of Hybrid Mechanisms}
\label{sec:efficiency:result}
Using the notions of ex-post efficiency, ordinal dominance, and rank dominance, we show that hybrids inherit a share of the good efficiency properties from the more efficient component.
\newpage
\begin{theorem}
\label{thm:efficiency}
\ThmEfficiencyStatement
\end{theorem}
The proof is given in Appendix \ref{app:proofs:efficiency}. 

Theorem \ref{thm:efficiency} shows that hybrid mechanisms inherit a part of the desirable efficiency properties from their more efficient component. 
Statement \ref{itm:thm:efficiency:expost} is important to ensure that the baseline requirement of ex-post efficiency is preserved. 
Statement \ref{itm:thm:efficiency:dominance} yields that if the component $\g$ is more efficient than the component $\f$ in the sense of ordinal or rank dominance, then all hybrids will have intermediate efficiency (i.e., $\h$ will dominate $\f$ but be dominated by $\g$). 
Furthermore, it is straightforward to see that under these conditions, efficiency improves as $\beta$ increases: 
consider two different hybrid mechanisms $\h$ and $h^{\beta'}$ with $\beta < \beta'$. 
By setting $\beta^* = \frac{\beta'-\beta}{1-\beta}$, we can write 
\begin{equation}
	h^{\beta'} = (1-\beta) \cdot \f + \beta \cdot \g = (1-\beta^*) \cdot h^{\beta} + \beta^* \cdot \g 
\end{equation}
as a $\beta^*$-hybrid with components $h^{\beta}$ and $\g$. 
Consequently, hybrids with higher mixing factors dominate hybrids with lower mixing factors. 

However, not all mechanisms are comparable everywhere. 
For example, the Probabilistic Serial mechanism is ordinally efficient, but it does not ordinally dominate the ordinally inefficient Random Serial Dictatorship mechanism at \emph{all} preference profiles. 
Instead, some assignments resulting under the two mechanisms may not be comparable by ordinal dominance. 
In these cases, the second direction of the equivalence in statement \ref{itm:thm:efficiency:dominance} becomes useful: 
when dominance does not permit a clear decision between assignments, then the hybrid will not have clearly worse efficiency than either component. 
Thus, intuitively, efficiency of the hybrid $\h$ is better than the efficiency of $\f$ \emph{whenever this statement is meaningful}. 
\newpage
\subsection{A Parametric Measure for Efficiency Gains}
\label{sec:efficiency:measure}
\enlargethispage{1em}
Hybrid mechanisms yield a natural measure for efficiency gains, namely the mixing factor $\beta$. 
First, consider a preference profile $\bm P$ and two mechanisms $\f,\g$, such that 
$\g(\bm P)$ ordinally dominates $\f(\bm P)$ at $\bm P$. 
Independent of the particular vNM utility functions underlying the agents' ordinal preferences, we know that every agent has (weakly) higher expected utility under $\g(\bm P)$ than under $\f(\bm P)$. 
Moreover, the agents' expected utility under the hybrid $\h$ is a linear function of the mixing factor because 
\begin{equation}
	\mathbb{E}_{\h_i(\bm P)}[u_i] = (1-\beta) \cdot \mathbb{E}_{\f_i(\bm P)}[u_i] + \beta \cdot \mathbb{E}_{\g_i(\bm P)}[u_i].
\end{equation}
Thus, the gain in any agent's expected utility from using $\h$ rather than $\f$ is exactly the $\beta$-share of the gain in the agent's expected utility from using $\g$ rather than $\f$. 

Second, suppose that $\g(\bm P)$ rank dominates $\f(\bm P)$ at $\bm P$. 
A \emph{rank valuation} $v:\{1,\ldots,m\} \rightarrow \mathds{R}$ with $v(k) \geq v(k+1)$ is a function that associates a value $v(k)$ with giving some agent its $k$th choice object. 
The \emph{$v$-rank value} of an assignment $x \in \Delta(X)$ is the aggregate expected value from choosing $x$ and it is given by 
\begin{equation}
	v(x,\bm P) = \sum_{k = 1}^m d^x_k \cdot v(k).
\end{equation}
Consequently, the $v$-rank value of the hybrid $\h$ is a linear function of the mixing factor: 
\begin{equation}
	v(\h(\bm P), \bm P) = (1-\beta) \cdot v(\f(\bm P), \bm P) + \beta \cdot v(\g(\bm P), \bm P).
\end{equation}
The fact that $\g(\bm P)$ rank dominates $\f(\bm P)$ implies that the $v$-rank value of $\g(\bm P)$ is (weakly) higher than the $v$-rank value of $\f(\bm P)$ for any rank valuation $v$ \citep{Featherstone2011RankBasedRefinementWP}. 
Thus, the gain in $v$-rank value from using $\h$ rather than $\f$ is exactly the $\beta$-share of the gain in $v$-rank value from using $\g$ rather than $\f$. 

\medskip
In combination, Theorems \ref{thm:construction} and \ref{thm:efficiency} show that parametric trade-offs between strategyproofness (measured by the degree of strategyproofness $\rho$) and efficiency (measured by the mixing factor $\beta$) are possible via hybrid mechanisms: 
when a pair of mechanisms is hybrid-admissible and the second component dominates the first, a higher mixing factor will yield hybrids that are more efficient (whenever such a statement is meaningful) but also have lower degree of strategyproofness. 
\subsection{Computability of Maximal Mixing Factor}
\label{sec:efficiency:compute}
Given our understanding of hybrids, the question arises how a mechanism designer can use this construction to perform a trade-off between strategyproofness and efficiency. 
Suppose that a minimal acceptable degree of strategyproofness $\underline\rho$ is given. 
Then the mechanism designer faces the computational problem of \emph{finding the highest mixing factor $\beta$, such that $\h$ remains $\underline\rho$-partially strategyproof}.
Formally, she is interested in 
\begin{equation}
	\bmax_{\NMQ,\f,\g}(\underline\rho) = \max\left\{ \beta \in [0,1] ~|~ \h\text{ is }\underline\rho\text{-partially strategyproof in }\NMQ \right\}
\end{equation}
In \citep{MennleSeuken2017PSP_WP}, we have shown that the degree of strategyproofness $\rho_{\NMQ}(\f)$ is computable. 
Thus, we have a solution to the problem of ``finding $\rho(\h)$, given $\beta$.'' 
However, the mechanism designer's problem is the \emph{inverse} of this problem, namely to ``find $\beta$, given $\underline\rho$.'' 
The following algorithm solves this problem. 
\begin{algorithm}[ht!]
\caption{\BetaMax}
\textbf{Input}: setting \NMQ, mechanisms $\f,\g$, bound $\underline\rho$ \\
\textbf{Variables}: agent $i$, preference profile $(P_i,P_{-i})$, misreport $P_i'$, vectors $\delta^{\f},\delta^{\g}$, rank $K$, choice function $\text{ch}$, real values $\bmax, p_K^{\f}, p_K^{\g}$ \\
\Begin{
	$\bmax \gets 1$ \\
	\For{$i \in N, (P_i,P_{-i}) \in \mathcal{P}^N, P_i' \in \mathcal{P}$}{
		$\forall j \in M: \delta^{\f}_j \gets \f_{i,j}(P_i,P_{-i}) - \f_{i,j}(P_i',P_{-i})$\\
		$\forall j \in M: \delta^{\g}_j \gets \g_{i,j}(P_i,P_{-i}) - \g_{i,j}(P_i',P_{-i})$\\
		\For{$K \in \{1,\ldots,m\}$}{
			$p_K^{\f} \gets \sum_{k = 1}^K \delta^{\f}_{\text{ch}_{P_i}(k)} \cdot \underline\rho^k$\\
			$p_K^{\g} \gets \sum_{k = 1}^K \delta^{\g}_{\text{ch}_{P_i}(k)} \cdot \underline\rho^k$\\
			\If{$p_K^{\g} < 0$}{
				$\bmax \gets \min\left\{\bmax,~ p_K^{\f} /\left(p_K^{\f}-p_K^{\g}\right)\right\}$
			}
		}
	}
	\Return $\bmax$
}
\label{alg:compute}
\end{algorithm}

Algorithm \ref{alg:compute} optimistically sets its guess of $\bmax$ to $1$. 
Then it iterates through all possible preference profiles, all agents, and all misreports that agents may submit. 
For each of these combinations, it uses the partial dominance interpretation of partial strategyproofness (Theorem 4 in \citep{MennleSeuken2017PSP_WP}) to determine whether the current guess is too high, and the value is adjusted downward if necessary. 
\begin{proposition}
\label{prop:computable}
\PropComputabilityStatement
\end{proposition}
The proof is given in Appendix \ref{app:proofs:computable}. 
\subsubsection{Computational Cost of BetaMax}
\label{sec:efficiency:compute:cost}
Note that our main goal is to show \emph{computability}, not computational efficiency. 
Nonetheless, we can make a statement about the computational cost of running \BetaMax: 
computing the random assignments from the mechanisms $\f$ and $\g$ may itself be a costly operation.\footnote{Determining the probabilistic assignment of a mechanism may be computationally hard, even if implementing the mechanism is easy (e.g., see \citep{Aziz2013ComplexityRSD}).}
Thus, if $O(\f)$ and $O(\g)$ denote the cost of determining $\f$ and $\g$ for a single preferences profile, respectively, then the overall cost of Algorithm \ref{alg:compute} is $O\left(n\cdot m \cdot (m!)^{n+1} \left( O(\f) + O(\g)\right)\right)$. 

In the most general case (i.e., without any additional restrictions), a mechanism is specified in terms of a set of assignment matrices $\{\f(\bm P), \bm P \in \mathcal{P}^N\}$. 
This set will contain $(m!)^n$ matrices of dimension $n \times m$.
Consequently, the size of the problem is $S = (m!)^n \cdot n \cdot m. $
In terms of $S$, Algorithm \ref{alg:compute} has complexity $ O\left(S \sqrt[n]{S}\right)$. 
Thus, for the general case, there is not much room for improvement: 
since the algorithm must consider each preference profile at least once, \emph{any} correct and complete algorithm has computational cost of at least $S$.
\subsubsection{Reductions of Computational Cost}
\label{sec:efficiency:compute:cost_reduction}
Reductions of the computational complexity are possible if more information is available about the mechanisms $\f$ and $\g$. 
For anonymous $\f$ and $\g$, the identities of the agents is irrelevant. 
In this case, the computational cost can be reduced to 
\begin{equation}
	O\left( n \cdot m! \left(\begin{array}{c} m! + n - 1 \\ n  \end{array}\right) \left( O(\f) + O(\g)\right)\right),
\end{equation}
because only $ \left(\begin{array}{c} m! + n - 1 \\ n \end{array}\right)$ preference profiles must be considered.
Moreover, if the mechanisms are also neutral (i.e., the assignment does not depend on the objects' names either), then it suffices to consider only agent 1 with a fixed preference order. 
With this, the computational cost can be further reduced to
\begin{equation}
	O\left( m! \left(\begin{array}{c} m! + n - 2 \\ n - 1  \end{array}\right) \left( O(\f) + O(\g)\right)\right).
\end{equation}
Even with these reductions, running Algorithm \ref{alg:compute} is costly for larger settings. 
However, it is likely that more efficient algorithms exist for mechanisms with additional restrictions, and bounds may be derived analytically for certain interesting mechanisms, such as Probabilistic Serial. 
Having shown computability, we leave the design of computationally more efficient algorithms to future research.
\vspace{-.5em}
\section{Application to Pairs of Popular Mechanisms}
\label{sec:applications}
\enlargethispage{3em}
So far, we have considered abstract hybrid mechanisms and we have derived general results. 
In this section, we consider concrete instantiations of our construction. 
Indeed, it is applicable to some (but not all) well-known mechanisms. 
$\f=\text{RSD}$ is a canonical choice because it is the only known mechanism that is strategyproof, ex-post efficient, and anonymous. 
In order to apply Theorem \ref{thm:construction} (for the construction of partially strategyproof hybrids), we must establish two requirements for the second component: 
$\g$ must be upper invariant, 
and $\g$ must be weakly less varying than RSD. 
Furthermore, to obtain efficiency gains, $\g$ must be more efficient than RSD in some sense. 
Table \ref{table:overview_instantiations} provides an overview of our results. 
Trade-offs for ordinal dominance can be achieved via hybrids of RSD and PS, and trade-offs for rank dominance are possible via hybrids of RSD and ABM. 
However, NBM and RV both violate hybrid-admissibility (in combination with RSD), and we find that in fact they do not admit a non-degenerate trade-off.\footnote{We provide short descriptions of the mechanisms RSD, PS, NBM, ABM, and RV in Appendix \ref{app:mechanisms}.} 

\begin{table}[h]%
\begin{tabular}{|l|l|l||c|c|c||c|}
\hline
$\bm{\f}$ & $\bm{\g}$ & \textbf{Dominance} & \textbf{UI} & \textbf{WLV} & $\bm{(\f,\g)}$ \textbf{hyb.-adm.} & $\bm{\h}$ \textbf{PSP} \\
\hline
\hline
RSD & PS & Ordinal & \cmark & \cmark & \cmark & \cmark \\
\hline
RSD & RV & Rank & \xmark & \xmark & \xmark &\xmark \\
\hline
RSD & NBM  & Rank & \cmark & \xmark & \xmark & \xmark \\
\hline
RSD & ABM & Rank (with exceptions) & \cmark & \cmark & \cmark & \cmark \\
\hline
\end{tabular}
\caption{Results overview, \emph{UI}: $\g$ upper invariance, \emph{WLV}: $\g$ weakly less varying than $\f$, \emph{PSP}: $r$-partially strategyproof for some $r>0$.}
\label{table:overview_instantiations}
\end{table}
\subsection{Hybrids of RSD and PS}
\label{sec:applications:ps}
By Theorem 2 of \cite{Hashimoto2013TwoAxiomaticApproachestoPSMechanisms}, PS is upper invariant.
Since PS is ordinally efficient, it is never ordinally dominated by RSD at any preference profile. 
Furthermore, PS may (but does not always) ordinally dominate RSD \citep{Bogomolnaia2001ANewSolution}.
Thus, PS ordinally dominates RSD whenever the two mechanisms are comparable.
To obtain hybrid-admissibility of the pair $(\text{RSD},\text{PS})$, it remains to be shown that PS is weakly less varying than RSD.
\begin{theorem} 
\label{thm:PS_wlv_RSD} 
\ThmPSWLVStatement
\end{theorem}
\begin{proof}[Proof Outline (formal proof in Appendix \ref{app:proofs:pswlv})] 
Consider an agent $i$ that swaps two objects in its report from $P_i:a\succ b$ to $P_i':b\succ a$. 
First, we show that PS changes the assignment \emph{if and only if} neither $a$ nor $b$ are exhausted when $i$ finishes consuming objects that it strictly prefers to both. 
Next, we show that RSD changes the assignment \emph{if and only if} there exists an ordering of the agents such that all objects that $i$ prefers strictly to $a$ are assigned before $i$ gets to pick, but neither $a$ nor $b$ are assigned by then. 
Finally, we show that the first condition (for PS) implies the second condition (for RSD), using an inductive argument. 
The key idea is to show that if no such ordering of the agents exists for $m$ objects, then we can construct a case with $m-1$ objects where no such ordering exists either.
\end{proof}

\begin{corollary} 
\label{cor:rsd_ps_hybrid_admissible} 
The pair (RSD,PS) admits the construction of partially strategyproof hybrids that improve efficiency in terms of ordinal dominance.
\end{corollary}
\subsection{Two Impossibility Results}
\label{sec:applications:impossible}
A mechanism designer may also want to trade off strategyproofness for improvements of the rank distribution.
Mechanisms that aim at achieving a good rank distribution are Rank Value mechanisms \citep{Featherstone2011RankBasedRefinementWP} and Boston mechanisms \citep{MennleSeuken2017ABM_WP}.
It turns out, however, that neither RV nor the na\"{i}ve variant of the Boston mechanism (NBM) are suitable second components in combination with RSD. 
\subsubsection{Impossibility Result for Rank Value Mechanisms}
\label{sec:applications:impossible:rv}
\enlargethispage{-1em}
RV rank dominates RSD whenever their outcomes are comparable, since RV is rank efficient, but RSD is not.
However, no rank efficient mechanism can be upper invariant, as we demonstrate in Example \ref{ex:RV_mix_manip}. 
Therefore, the pair $(\text{RSD},\text{RV})$ violates hybrid-admissibility.\footnote{In addition to violating upper invariance, RV is not weakly less varying than RSD in general; see Example \ref{ex:RV_not_WLV} in Appendix \ref{app:examples}.} 
\begin{example}
\label{ex:RV_mix_manip}
Consider a setting with agents $N=\{1,2,3\}$ and objects $M = \{a,b,c\}$, each available in unit capacity. 
If the agents have preferences 
\begin{eqnarray*}
P_1,P_2 & : & a \succ b \succ c, \\
P_3 & : & c \succ a \succ b,
\end{eqnarray*}
then any rank efficient assignment must assign $c$ to agent $3$. 
Therefore, at least one of the agents $1$ and $2$ has a positive probability for $b$. 
Without loss of generality, let $\text{RV}_{1,b}(\bm P) > 0$. 
If agent 1 reports 
\begin{eqnarray*}
P_1' & : & a \succ c \succ b
\end{eqnarray*}
instead, then the unique rank efficient assignment must assign $a$ to agent 1. 
Since this misreport changes agent 1's assignment of $a$, RV is not upper invariant. 
\end{example}
It follows from Example \ref{ex:RV_mix_manip} that any non-trivial hybrid $\h$ of RSD and RV will violate upper invariance. 
This means that $\h$ will not be $r$-partially strategyproof for any positive $r > 0$ (by Fact \ref{fact:thm2}), or equivalently, $\h$ will have a degree of strategyproofness of $0$. 
This teaches us that RSD and RV indeed do not admit the construction of hybrid mechanisms that make a non-degenerate trade-off between strategyproofness and efficiency. 
\subsubsection{Impossibility Result for the Na\"ive Boston Mechanism}
\label{sec:applications:impossible:nbm}
We consider the Boston mechanism with no priorities and single uniform tie-breaking \citep{Miralles2008CaseForBostonWP}. 
The ``na\"{i}ve'' variant of the Boston mechanism (\emph{NBM}) lets agents apply to their respective next best choices in consecutive rounds, even if the objects to which they apply have no more remaining capacity. 
The assignments from NBM rank dominate those from RSD whenever they are comparable, and NBM is also upper invariant \citep{MennleSeuken2017ABM_WP}. 
However, NBM is \emph{not} weakly less varying than RSD, as Example \ref{ex:NBM_not_WLV} shows. 
Thus, pairs of RSD and NBM violate hybrid-admissibility. 
\begin{example}
\label{ex:NBM_not_WLV} 
\ExampleNBMnotWLV
\end{example}
In fact, Example \ref{ex:NBM_not_WLV} shows something more, namely that at the particular preference profile, NBM is manipulable in a first order-stochastic dominance sense, while RSD does not change the assignment at all. 
Thus, any hybrid of RSD and NBM will also be manipulable in a first order-stochastic dominance sense at this preference profile. 
Consequently, the hybrid cannot be $r$-partially strategyproof for any $r>0$ (by Proposition 2 in \citep{MennleSeuken2017PSP_WP}). 
Analogous to the pair $(\text{RSD},\text{RV})$, we learn that the pair $(\text{RSD},\text{NBM})$ does not admit the construction of hybrid mechanisms with non-degenerate degrees of strategyproofness either. 
\subsection{Hybrids of RSD and ABM}
\label{sec:applications:abm}
In \citep{MennleSeuken2017ABM_WP}, we have formalized an adaptive variant of the Boston mechanism (ABM), which is sensitive to the fact that agents cannot benefit from applying to objects that were already exhausted in previous rounds. 
Instead, in each round, agents who have not been assigned so far, apply to their most preferred \emph{available} object.

Our analysis of ABM has revealed two further attributes: 
first, ABM is upper invariant, one of the conditions we need for hybrid-admissibility \citep{MennleSeuken2017PSP_WP}. 
Second, ABM rank dominates RSD whenever the two mechanisms are comparable, except in certain special cases \citep{MennleSeuken2017ABM_WP}. 
These exceptions occur rarely, and the probability of encountering them vanishes as markets get large.
Thus, if we can show that ABM is also weakly less varying than RSD, then we can use this pair to construct partially strategyproof hybrids that trade off strategyproofness and efficiency in terms of rank dominance (with the exception of a tiny number of preference profiles). 
\begin{theorem} 
\label{thm:ABM_wlv_RSD} 
\ThmABMWLVStatement
\end{theorem}
\begin{proof}[Proof outline (formal proof in\ Appendix \ref{app:proofs:abmwlv})] 
Both RSD and ABM are implemented by randomizing over orderings $\pi$ of agents. 
Suppose $i$ manipulates by swapping $a$ and $b$. 
If ABM changes the assignment, then there exists $\pi$ such that all objects that $i$ prefers strictly to $a$ and $b$ are assigned in previous rounds to other agents. 
Then $i$ gets to ``pick'' between $a$ and $b$ under ABM. 
Starting with $\pi$, we construct an ordering $\pi'$ such that if the ordering $\pi'$ is drawn under RSD, $i$ gets to pick between $a$ and $b$ but no object it strictly prefers to $a$ or $b$ under RSD. 
This is sufficient for RSD to also change the assignment. 
\end{proof}

\begin{corollary} 
\label{cor:rsd_abm_hybrid_admissible} 
The pair (RSD,ABM) admits the construction of partially strategyproof hybrids that improve efficiency in terms of rank dominate (with few exceptions). 
\end{corollary}
\subsection{Numerical Results}
\label{sec:applications:numeric}
We have shown that we can construct interesting hybrids by combining RSD with PS or ABM.
This gives mechanism designers the possibility to trade off strategyproofness for better efficiency.
To illustrate the magnitude of these trade-offs, we have computed $\bmax$ for a variety of settings \NMQ\ and acceptable degrees of strategyproofness $\rho \in [0,1]$. 

\begin{figure}[t]
\begin{center}
\includegraphics[width=\columnwidth]{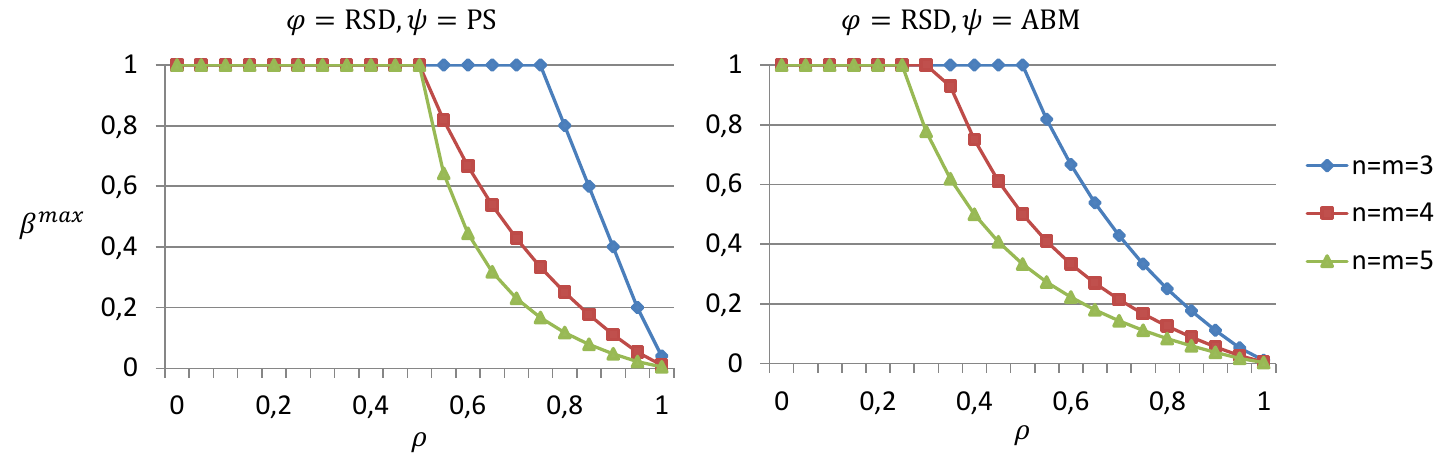}
\caption{Plots of $\bmax$ by acceptable degree of strategyproofness $\rho \in [0,1]$, for components $\f=\text{RSD}, \g\in \{\text{PS},\text{ABM}\}$, and settings with $n = m \in \{3,4,5\}$.}
\label{fig:beta_max_PS_ABM}
\end{center}
\end{figure}
Figure \ref{fig:beta_max_PS_ABM} shows plots of the maximal mixing factor $\bmax$ for settings with unit capacity and different numbers of objects and agents.
Observe that as the acceptable degree of strategyproofness for the hybrid increases, the allowable share of $\g$ decreases and becomes $0$ if full strategyproofness is required.
We also see that the relationship between $\rho$ and $\bmax$ is not linear.
In particular, the first efficiency improvements (from $\bmax = 0$ to $\bmax > 0$) are the most ``costly'' in terms of a reduction of the degree of strategyproofness $\rho$.
On the other hand, for mild strategyproofness requirements, the share of PS or ABM in the hybrid can be significant, e.g., more than $30\%$ of PS or $17\%$ of ABM for $\rho = 0.75$ and $n=m=4$.

\begin{figure}[t]
\begin{center}
\includegraphics[width=0.58\columnwidth]{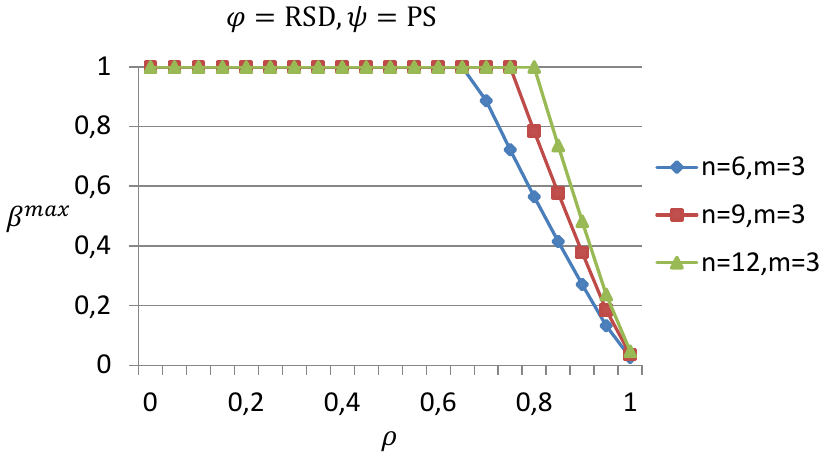}
\caption{Plots of $\bmax$ by acceptable degree of strategyproofness $\rho \in [0,1]$, for components $\f=\text{RSD}, \g=\text{PS}$, and settings with $m = 3, n \in \{6,9,12\}$.}
\label{fig:beta_max_PS_more_capacity}
\end{center}
\end{figure}
Figure \ref{fig:beta_max_PS_more_capacity} shows plots of $\bmax$ for hybrids of RSD and PS, where we hold the number of objects constant at $m = 3$ but vary the capacity of the objects $q\in \{2,3,4\}$ (with $n=q\cdot m$ agents).
We observe that for larger capacities, the hybrids can contain a larger share of PS.
This is consistent with findings by \citet{Kojima2010IncentivesInTheProbabilisticSerial}, who have shown that for a fixed agent and a fixed number of objects, PS makes truthful reporting a dominant strategy if the capacities of the objects are sufficiently high. 
It is conceivable that the degree of strategyproofness of PS keeps increasing and converges to 1 in the limit as capacity increases, an interesting question for future research.
\section{Conclusion}
\label{sec:conclusion}
In this paper, we have presented a novel approach to trading off strategyproofness and efficiency for random assignment mechanisms. 
We have introduced hybrid mechanisms, which are convex combinations of two component mechanisms, as a method to facilitate these trade-offs. 
Typically, the first component $\f$ introduces better incentives while the second component $\g$ introduces better efficiency.

For our first result, we have employed partial strategyproofness, a new concept for relaxing strategyproofness in a parametric way that we have introduced in \citep{MennleSeuken2017PSP_WP}. 
If (1) $\f$ is strategyproof and (2) $\g$ is upper invariant and (3) weakly less varying than $\f$, we have shown that partially strategyproof hybrids can be constructed for any desired degree of strategyproofness. 
At the same time, our hybrid-admissibility requirement is tight in the sense that none of the three conditions can be dropped without risking degenerate trade-offs.

For our second result, we have shown that hybrids inherit ex-post efficiency from their components, and their efficiency (relative to the components) can be understood in terms of ordinal (or rank) dominance. 
This means that, in line with intuition, hybrid mechanisms in fact trade off strategyproofness for efficiency: 
as the mixing factor $\beta$ (i.e., the share of $\g$) increases, efficiency of the hybrid increases, but the degree of strategyproofness decreases. 
This has important consequences for mechanism designers: 
if $\f$ is a strategyproof mechanism, $\g$ is a non-strategyproof alternative that is more appealing due to its efficiency properties, and a certain degree of strategyproofness $\underline\rho < 1$ is acceptable, then a hybrid can be used to improve efficiency, subject to the $\underline\rho$-partial strategyproofness constraint. 
As we have shown in Section \ref{sec:efficiency:compute}, the mechanism designer's problem of determining the maximal mixing factor can be solved algorithmically.

Finally, we have presented instantiations of hybrid mechanisms with $\f=\text{RSD}$ as the strategyproof component. 
Using $\g = \text{PS}$ yields better efficiency in an ordinal dominance sense, and using $\g = \text{ABM}$, an adaptive variant of the Boston mechanism, yields better efficiency in a rank dominance sense (with few exceptions). 
Numerically, we have illustrated the connection between the degree of strategyproofness $\rho$ and the maximal mixing factor $\bmax$, and we have shown that the latter can be significant for even mild reductions of the minimal acceptable degree of strategyproofness.

This paper contributes to an important area of research concerned with trade-offs between strategyproofness and efficiency in the assignment domain. 
Hybrid mechanisms break new ground because the method is constructive, it enables a parametric trade-off, and the mechanism designer's problem of determining a suitable hybrid is computable. 
Our hybrids shed light on the frontiers of such trade-offs and can serve as benchmark mechanisms for future research.
\begin{small}


\end{small}

\newpage
\appendix
\section*{APPENDIX}
\setcounter{section}{0}
\section{Mechanisms}
\label{app:mechanisms}
We explain how each mechanism determines the assignment based on a reported profile $\bm P$ of preferences. 
\subsection{Random Serial Dictatorship Mechanism}
The \emph{Random Serial Dictatorship} mechanism chooses an agent uniformly at random and assigns this agent its first choice object. 
Next, it chooses another agent uniformly at random from the remaining agent and assigns this agent the object that it prefers most out of all the objects that have remaining capacity. 
This continues until all agents have received an object. 
The random assignment matrix arises from the fact that agents do not know when they will be chosen by the mechanism. 
\subsection{Probabilistic Serial Mechanism}
Under the \emph{Probabilistic Serial} mechanism, the objects are treated as if they were divisible. 
All agents start consuming probability shares of their first choice objects at equal speeds. 
Once all capacity of an object is completely consumed, all agents who were consuming this object, move on to their next preferred object. 
If this next object is already exhausted as well, they go directly to the next object, and so on. 
This process continues until all agents have collected a total of 1 units of some objects. 
The shares of objects that each agent has collected are the entries in the assignment matrix of the Probabilistic Serial mechanism. 
\subsection{Na\"ive Boston Mechanism}
Under the \emph{na\"ive Boston} mechanism, all agents report their preferences and then draw a random number. 
The assignment process occurs in rounds. 
In the first round, each agent applies to its most preferred object. 
Applicants are assigned the objects to which they applied if these have sufficient capacity. 
If an object has more applicants than remaining capacity, preference is give to agents with higher random numbers. 
The agents who did not get an object in the first round continue to the second round. 
In the $k$th round, each remaining agent applies to its $k$th choice. 
Again, objects are assigned to agents until their capacity is exhausted, and the unlucky agents with the lowest random numbers enter the next round. 
The assignment process ends when all agents have received an object. 
The random assignment matrix arises from the fact that agents do not know their random numbers. 
\subsection{Adaptive Boston Mechanism}
The \emph{adaptive Boston} mechanism works like the na\"ive Boston mechanism, except that in each round, the remaining agents apply to the object that they prefer most out of all the objects that still have remaining capacity. 
Again, the random assignment matrix arises from the fact that agents do not know their random numbers. 
\subsection{Rank Value Mechanism}
\emph{Rank Value} mechanisms are a class of mechanisms. 
Given a rank valuation $v : \{1,\ldots,m\} \rightarrow \mathds{R}$ with $v(k) \geq v(k+1)$, a $v$-Rank Value mechanism determines an assignment by solving the following linear program:
\begin{eqnarray*}
	\text{maximize} & & \sum_{i\in N} \sum_{j \in M} v(\text{rank}_{P_i}(j)) \cdot x_{i,j}, \\
	\text{subject to} & & \sum_{i\in N} x_{i,j} = 1, \text{ for all }j \in M, \\
	& & \sum_{j \in M} x_{i,j} \leq q_j, \text{ for all }i \in N, \\
	& & x_{i,j} \in [0,1], \text{ for all }i\in N, j \in M,
\end{eqnarray*}
where $\text{rank}_{P_i}(j)$ is the rank of $j$ under the preference ranking of agent $i$, i.e., the number of objects that this agent weakly prefers to $j$. 
\section{Example from Section \ref{sec:applications:impossible}}
\label{app:examples}
\begin{example}[RV not Weakly Less Varying than RSD] 
\label{ex:RV_not_WLV} 
\ExampleRVnotWLV
\end{example}
\section{Omitted Proofs} 
\label{app:proofs}
\subsection{Proof of Theorem \ref{thm:construction}}
\label{app:proofs:construction}
\begin{proof}[Proof of Theorem \ref{thm:construction}]
\textit{\ThmConstructionStatement} 

\medskip
\ThmConstructionProof
\end{proof}
\subsection{Proof of Theorem \ref{thm:efficiency}}
\label{app:proofs:efficiency}
\begin{proof}[Proof of Theorem \ref{thm:efficiency}]
\textit{\ThmEfficiencyStatement} 

\medskip
\ThmEfficiencyProof
\end{proof}
\subsection{Proof of Proposition \ref{prop:computable}}
\label{app:proofs:computable}
\begin{proof}[Proof of Proposition \ref{prop:computable}]
\textit{\PropComputabilityStatement} 

\medskip
\PropComputabilityProof
\end{proof}
\subsection{Proof of Theorem \ref{thm:PS_wlv_RSD}}
\label{app:proofs:pswlv}
\begin{proof}[Proof of Theorem \ref{thm:PS_wlv_RSD}]
\textit{\ThmPSWLVStatement} 

\medskip
\ThmPSWLVProof
\end{proof}
\subsection{Proof of Theorem \ref{thm:ABM_wlv_RSD}}
\label{app:proofs:abmwlv}
\begin{proof}[Proof of Theorem \ref{thm:ABM_wlv_RSD}]
\textit{\ThmABMWLVStatement} 

\medskip
\ThmABMWLVProof
\end{proof}
\end{document}